\newtheorem{lemma}{Lemma}
\newtheorem{cor}{Corollary}
\newtheorem{theorem}{Theorem}
\newcommand{\R}{\mathbb{R}}
\def \real    { \mathbb{R} }
\newcommand{\C}{\mathbb{C}}
\newcommand{\e}{\begin{equation}}
\newcommand{\ee}{\end{equation}}
\newcommand{\en}{\begin{equation*}}
\newcommand{\een}{\end{equation*}}
\newcommand{\eqn}{\begin{eqnarray}}
\newcommand{\eeqn}{\end{eqnarray}}
\newcommand{\bmat}{\begin{bmatrix}}
\newcommand{\emat}{\end{bmatrix}}
\DeclareMathAlphabet\mathbfcal{OMS}{cmsy}{b}{n}
\renewcommand{\P}[1]{\operatorname{\mathbb{P}}\left(#1\right)}
\newcommand{\E}{\operatorname{\mathbb{E}}}
\newcommand{\vct}[1]{\boldsymbol{#1}}
\newcommand{\mtx}[1]{\boldsymbol{#1}}
\newcommand{\<}{\langle}
\renewcommand{\>}{\rangle}
\newcommand{\trace}{\operatorname{trace}}
\newcommand{\set}[1]{\mathbb{#1}}
\DeclareMathOperator*{\argmin}{\text{arg~min}}
\def \st {\operatorname*{s.t.\ }}
\newcommand{\wh}{\widehat}
\newcommand{\wt}{\widetilde}
\newcommand{\nqbit}{n}
\newcommand{\norm}[2]{\left\| #1 \right\|_{#2}}
\newcommand{\bracket}[1]{\left( #1 \right)}
\newcommand{\innerprod}[2]{\left\langle #1,  #2 \right\rangle}
\newcommand{\calN}{\mathcal{N}}
\newcommand{\calP}{\mathcal{P}}
\newcommand{\ve}{\vct{e}}
\newcommand{\vg}{\vct{g}}
\newcommand{\vu}{\vct{u}}
\newcommand{\vphi}{\vct{\phi}}
\newcommand{\vrho}{\vct{\rho}}
\newcommand{\vzero}{\vct{0}}
\newcommand{\mA}{\mtx{A}}
\newcommand{\mB}{\mtx{B}}
\newcommand{\mD}{\mtx{D}}
\newcommand{\mF}{\mtx{F}}
\newcommand{\mU}{\mtx{U}}
\newcommand{\mV}{\mtx{V}}
\newcommand{\mX}{\mtx{X}}
\newcommand{\mSigma}{\mtx{\Sigma}}
\newcommand{\mId}{{\bf I}}
\newcommand{\setX}{\set{X}}
\newlength{\imgwidth}
\definecolor{orange}{RGB}{255,127,50}
\newcommand{\imag}{\mathrm{i}}
\newcommand{\twoCol}[2]{\ifthenelse{\boolean{twoColVersion}} {#1} {#2} }
\begin{document}

\preprint{APS/123-QED}

\title{Enhancing Quantum State Reconstruction with  Structured Classical Shadows}

\author{Zhen Qin}
\email{qin.660@osu.edu}
\affiliation{Department of Computer Science and Engineering, The Ohio
State University, Columbus, Ohio 43210, USA}
\author{Joseph M. Lukens}
\email{jlukens@purdue.edu}
\affiliation{Elmore Family School of Electrical and Computer Engineering and Purdue Quantum Science and Engineering Institute, Purdue University, West Lafayette, Indiana 47907, USA}
\affiliation{Quantum Information Science Section, Oak Ridge National Laboratory, Oak Ridge, Tennessee 37831, USA}
\affiliation{Research Technology Office and Quantum Collaborative, Arizona State University, Tempe, Arizona 85287, USA}
\author{Brian T. Kirby}
\email{brian.t.kirby4.civ@army.mil}
\affiliation{DEVCOM Army Research Laboratory, Adelphi, MD 20783, USA}
\affiliation{Tulane University, New Orleans, LA 70118, USA}
\author{Zhihui Zhu}
\email{zhu.3440@osu.edu}
\affiliation{Department of Computer Science and Engineering, The Ohio
State University, Columbus, Ohio 43210, USA}

\begin{abstract}
Quantum state tomography (QST) remains the prevailing method for benchmarking and verifying quantum devices; however, its application to large quantum systems is rendered impractical due to the exponential growth in both the required number of total state copies and classical computational resources. Recently, the classical shadow (CS) method has been introduced as a more computationally efficient alternative, capable of accurately predicting key quantum state properties. Despite its advantages, a critical question remains as to whether the CS method can be extended to perform QST with guaranteed performance.
In this paper, we address this challenge by introducing a projected classical shadow (PCS) method with guaranteed performance for QST based on Haar-random projective measurements. PCS extends the standard CS method by incorporating a projection step onto the target subspace. For a general quantum state consisting of $n$ qubits, our method requires a minimum of $O(4^n)$ total state copies to achieve a bounded recovery error in the Frobenius norm between the reconstructed and true density matrices, reducing to $O(2^n r)$ for states of rank $r<2^n$---meeting information-theoretic optimal bounds in both cases.
For matrix product operator states, we demonstrate that the PCS method can recover the ground-truth state with $O(n^2)$ total state copies, improving upon the previously established Haar-random bound of $O(n^3)$. Simulation results further validate the effectiveness of the proposed PCS method.
\end{abstract}

\maketitle

\section{Introduction}

Quantum state tomography (QST) is widely used for estimating quantum states~\cite{bertrand1987tomographic,vogel1989determination,leonhardt1995quantum,hradil1997quantum,James2001}. To reconstruct the density matrix with high accuracy, measurements should be performed on a large number of identical copies; specifically, for single-copy (i.e., non-collective)
measurements,
a minimum of $O(4^n)$ total copies is required to estimate the density matrix of an $n$-qubit system with a bounded recovery error, as defined by the Frobenius norm between the reconstructed and true density matrices \cite{haah2017sample}. Various methods have been proposed to achieve efficient and accurate QST. Classical computational approaches include linear inversion \cite{fano1957description}, maximum likelihood estimation \cite{hradil1997quantum,vrehavcek2001iterative,James2001}, Bayesian inference \cite{blume2010optimal,granade2016practical,lukens2020practical}, region estimation \cite{blume2012robust,faist2016practical}, classical machine learning \cite{lohani2020machine}, and least squares estimators \cite{kyrillidis2018provable,brandao2020fast,zhu2024connection}. In contrast, quantum machine learning methods encompass algorithms such as variational quantum circuits \cite{sen2022variational,liu2020variational}, quantum principal component analysis \cite{lloyd2014quantum}, and quantum variational algorithms combined with classical statistics \cite{kurmapu2020machine}.

\begin{figure*}[!bt] 
   \centering
   \includegraphics[width=12cm, keepaspectratio]%
   {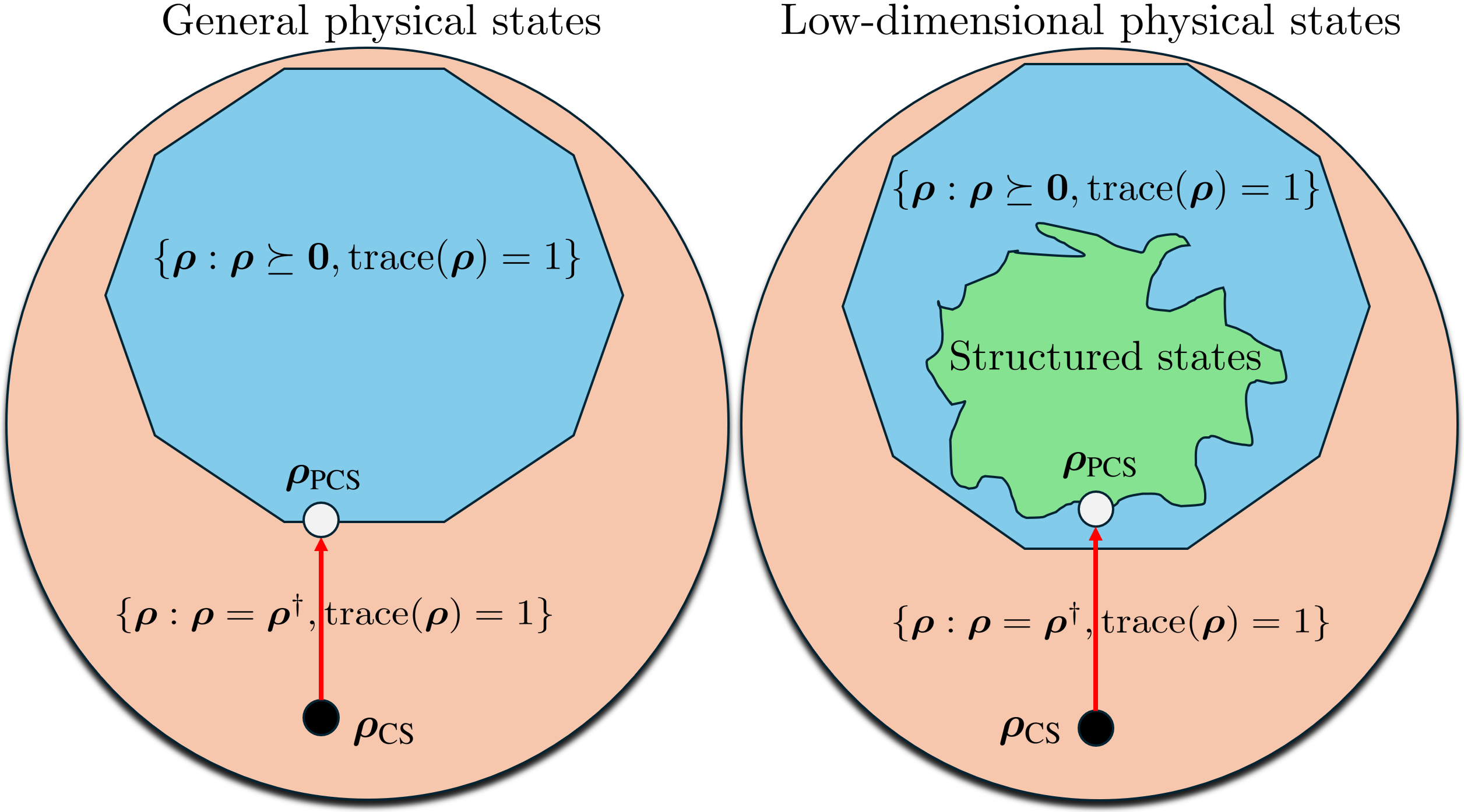}
 \vspace{0in}
   \caption{Illustration of proposed PCS method. Given an initial CS estimate $\vrho_{\text{CS}}$ lying in the space of Hermitian and unit-trace matrices (not necessarily PSD), we compute the closest state $\vrho_{\text{PCS}}$ in the physical space of interest---either the space of all possible states (left) or a subspace possessing a desired structure (right).}
   \label{Illustration of projected classical shadow method}
\end{figure*}

A significant reduction in the number of required state copies can be achieved by assuming two common low-dimensional structures: low-rankness and matrix product operators (MPOs). $(i)$ Low-rank density matrices frequently emerge in quantum systems with pure or nearly pure states that exhibit low entropy~\cite{KuengACHA17,guctua2020fast,francca2021fast,voroninski2013quantum,haah2017sample},
and low-rank assumptions are employed in various state estimation procedures, with a range of associated measurement processes, including 4-designs \cite{KuengACHA17}, Pauli strings~\cite{liu2011universal,guctua2020fast}, Clifford gates~\cite{francca2021fast}, and Haar-random projective measurements~\cite{voroninski2013quantum}. When the density matrix has rank $r$, the required number of total state copies can be reduced to $O(2^n r)$ \cite{haah2017sample,francca2021fast}, yet this remains exponential in $n$, posing challenges for current quantum computers exceeding 100 qubits. $(ii)$ MPOs, on the other hand, offer a more scalable alternative for certain quantum systems, including one-dimensional spatial systems \cite{eisert2010}, Hamiltonians with decaying long-range interactions \cite{pirvu2010matrix}, and states generated by noisy quantum devices \cite{noh2020efficient}. When employing Haar-random projective measurements \cite{qin2024quantum} or specific classes of informationally complete positive operator-valued measures (IC-POVMs) \cite{qin2024sample}, the required number of total state copies can be reduced to polynomial scaling---either $O(n^3)$ or $O(n)$, respectively---while ensuring bounded recovery error for MPO states.

While algorithms with low-rank assumptions or low-dimensional structures can enable significantly improved scaling, they still face considerable computational complexity, which in existing approaches can be attributed to four potential operations: $(i)$ the calculation of the inverse; $(ii)$ repeated inner product operations between matrices that grow exponentially with $n$; $(iii)$ multiple projection steps onto the target subspace; and $(iv)$ additional matrix multiplications introduced by nonconvex algorithms to enforce low-rankness or MPO representations.
Recently, an efficient and experimentally feasible approach, known as classical shadow (CS) estimation, was introduced by Ref.~\cite{huang2020predicting} to infer limited sets of state properties like fidelity, entanglement measures, and correlations. By exploiting efficient computational and storage capabilities on classical hardware, all necessary processing to predict these properties can be carried out via classical computations. This has sparked a series of studies leveraging the CS method \cite{acharya2021shadow,struchalin2021experimental,akhtar2023scalable,grier2024sample,ippoliti2024classical,becker2024classical}. However, existing research predominantly focuses on {\it{state properties}}, raising the critical question of whether the CS method can be effectively extended to the {\it{full state}} (i.e., QST) with guaranteed performance.

In this paper, we derive performance guarantees for QST using a method we term projected classical shadow (PCS), which projects CS estimators onto target subspaces of the Hilbert space, as illustrated in  Fig.~\ref{Illustration of projected classical shadow method}.
Given that the original CS density matrix is Hermitian but not in general positive semidefinite (PSD), our method involves projecting its eigenvalues onto the simplex~\cite{chen2011projection}. We demonstrate that this approach requires $O(4^n)$ total state copies to achieve a bounded recovery error in the Frobenius norm. For low-rank states, we further leverage (truncated) low-rank eigenvalue decomposition and show that the required number of total state copies can be reduced to $O(2^n r)$ for the same accuracy. Finally, for MPO states, we employ a quasi-optimal MPO projection---tensor-train singular value decomposition (TT-SVD) \cite{Oseledets11} with a simplex projection---to form the PCS  step, demonstrating that with $O(n^2)$ total state copies, the method reliably recovers the ground-truth state. While suboptimal relative to the degrees of freedom for MPO states, this approach improves upon the theoretical $O(n^3)$ scaling in Ref.~\cite{qin2024quantum}.  PCS also offers a framework for incorporating prior knowledge about the target state form into the CS approach.

\paragraph*{Notation:} We use bold capital letters (e.g., $\bm{X}$) to denote matrices,  bold lowercase letters (e.g., $\bm{x}$) to denote column vectors, and italic letters (e.g., $x$) to denote scalar quantities. Matrix elements are denoted in parentheses. For example, $\mX(i_1, i_2)$ denotes the element in position
$(i_1, i_2)$ of the matrix $\mX$.   The superscripts $(\cdot)^\top$ and $(\cdot)^\dagger$ denote the transpose and Hermitian transpose, respectively. For two matrices $\mA,\mB$ of the same size, $\innerprod{\mA}{\mB} = \trace(\mA^\dagger\mB)$ denotes the inner product.
$\|\mX\|$, $\|\mX\|_1$, and $\|\mX\|_F$ respectively represent the spectral, trace, and Frobenius norm of $\mX$.
For two positive quantities $a,b\in \real^+$, the inequality $b\lesssim a$ or $b = O(a)$ implies $b\leq c a$ for some universal constant $c$; likewise, $b\gtrsim a$ or $b = \Omega(a)$ represents $b\ge ca$ for some universal constant $c$.

\section{Classical Shadows}
\label{sec:classical shadow}

Quantum information science harnesses quantum states for information processing \cite{nielsen2000quantum}. The state of an $n$-qubit system can be  described by the density operator $
\vrho\in\C^{2^n\times 2^n}$, which is PSD ($\vrho \succeq \vzero$) and has unit-trace ($\trace(\vrho) = 1$). In order to estimate this states, measurements can be performed on a collection of copies.

\paragraph*{Projective measurements:}
Within the most general quantum measurement framework of positive operator valued measures (POVMs) \footnote{Specifically, a POVM is characterized as a set of PSD matrices: $\{\mA_1,\ldots,\mA_K \}\in\C^{2^n\times 2^n}, \ \ \st  \sum_{k=1}^K \mA_k = \mId_{2^n}$. Each POVM element $\mA_k$ corresponds to a potential outcome of a quantum measurement with the special case of projective measurements corresponding to the case where all $\mA_k$ are pairwise orthogonal projection operators, meaning they satisfy $\mA_k^{2}=\mA_k$ and $\mA_{k}\mA_{j}=0$ for $k\ne j$},
the special case of  projective measurements is often employed, where the measurement outcomes are associated with an orthonormal eigenbasis of the system.
To implement such a measurement defined by an arbitrary orthonormal basis $\{\vphi_k : \vphi_k^\dagger\vphi_l =\delta_{kl}\}$, we can introduce a unitary matrix $\mU = \begin{bmatrix} \vphi_1 & \cdots & \vphi_{2^n} \end{bmatrix}\in\C^{2^n\times 2^n}$ and apply $\mU^\dagger$ to the state $\vrho$ before conducting a projective measurement in the computational basis $\{\ve_k\}$, where $\mU \ve_k = \vphi_k$. The probability of observing the $k$-th outcome is given by:
\begin{eqnarray}
p_k = \<\vphi_k\vphi_k^\dagger, \vrho  \> = \ve_k^\dagger \bracket{\mU^{\dagger} \vrho \mU} \ve_k.
\label{Probability of rank-one orth POVM}
\end{eqnarray}
However, a single projective measurement,
even if repeated infinitely many times, provides only partial information on $\vrho$, so
multiple projective measurements must be conducted in various bases. In the subsequent discussion, we denote the number of distinct measurement bases by $M$, and the measurement operators for the $m$-th projective measurement by $ \{\vphi_{m,1}\vphi_{m,1}^\dagger,\ldots, \vphi_{m,2^n}\vphi_{m,2^n}^\dagger   \}$.

\paragraph*{Classical shadow (CS):}
Consider the original CS proposal with single-shot Haar-random projective measurements.
Given an unknown $n$-qubit ground truth $\vrho^\star$, we repeatedly execute the measurement procedure above Eq.~\eqref{Probability of rank-one orth POVM} in which $\mU$ is chosen randomly from the Haar distribution and each measurement is performed on only one copy (i.e., a new $\mU$ is selected for each copy measured) .
The specific result $\ve_{j_m}$ yields a snapshot, or ``shadow,'' of the underlying quantum state, which for Haar-distributed unitaries can be expressed as~\cite{huang2020predicting}:
\begin{eqnarray}
    \label{one time classical shadow}
    \vrho_m &=& (2^n + 1)\mU_m\ve_{j_m}\ve_{j_m}^\dagger \mU_m^\dagger   - \mId_{2^n}\nonumber\\
    &=& (2^n+1) \vphi_{m,j_m}\vphi_{m,j_m}^\dagger - \mId_{2^n}.
\end{eqnarray}
By construction, this snapshot equals the ground truth in expectation (over both unitaries and measurement outcomes): $\E[\vrho_m] = \vrho^\star$. Executing this process $M$ times produces an array of $M$ independent classical snapshots for the total CS estimator:
\begin{eqnarray}
    \label{classical shadow}
    \vrho_{\text{CS}} &=& \frac{1}{M}\sum_{m=1}^M\vrho_m\nonumber\\
    &=& \frac{1}{M}\sum_{m=1}^M \left[(2^n+1) \vphi_{m,j_m}\vphi_{m,j_m}^\dagger - \mId_{2^n}\right].
\end{eqnarray}
\paragraph*{CS for Tomography?} Although CS estimators can efficiently predict observables of $\vrho^\star$, to our knowledge there exist no results concerning recovery error of the \textit{full state}. Following the detailed derivation in Appendix~\ref{proof of expectation of MSE}, we find the expectation of the mean squared error:
\begin{eqnarray}
    \label{proof of expectation of MSE main result}
     \E\|\vrho_{\text{CS}}  - \vrho^\star \|_F^2 = \frac{4^n + 2^n - 1 - \|\vrho^\star \|_F^2}{M}.
\end{eqnarray}
Given that $\|\vrho^\star \|_F^2 \leq \left[\trace(\vrho^\star)\right]^2=1$, it follows that Eq.~\eqref{proof of expectation of MSE main result} can be simplified to
\begin{eqnarray}
    \label{proof of expectation of MSE main result 1}
     \E\|\vrho_{\text{CS}}  - \vrho^\star \|_F^2 \approx \frac{4^n}{M}
\end{eqnarray}
for large $n$. Eq.~\eqref{proof of expectation of MSE main result 1} demonstrates that  stable recovery of the full state can be achieved only when $M$ scales proportionally to $4^n$, aligning with the optimal $M$ required in QST for general states~\cite{haah2017sample}.

A comparison between CS and traditional QST returns several key observations of relevance to this study:
\begin{enumerate}
\item CS yields an unbiased estimate ($\E[\vrho_{\text{CS}}] = \vrho^\star$), whereas the solution from QST  is often biased~\cite{schwemmer2015systematic}.
\item While the CS estimator is typically unphysical (not PSD), leading QST methods like MLE~\cite{James2001}, projected least squares~\cite{smolin2012efficient}, and Bayesian inference~\cite{blume2010optimal} enforce physicality by construction.
\item CS boasts significantly lower computational complexity compared to QST.
\item For $M\ll 2^n$, CS outperforms QST in predicting certain linear observables, not in predicting the entire state  \cite{lukens2021bayesian,zhu2024connection}.
\item Including prior information about state structure allows for a reduction in scaling in QST (see \Cref{Methods comparison LR,Methods comparison MPO}). Currently no known method for similarly reducing CS scaling exists. In other words, CS requires $O(4^n)$ measurements for estimating the full state, as demonstrated in Eq. \eqref{proof of expectation of MSE main result 1}.
\end{enumerate}
In the next section we investigate methods for incorporating prior information about state structure into CS to reduce the scaling shown in Eq. \eqref{proof of expectation of MSE main result 1}.

\section{Projected Classical Shadow (PCS) for QST}
\label{sec: Projected classical shadow}
In this section, we will study the application of CS for the task of describing the full quantum state and show that, with a simple projection step, CS estimators are also effective for QST and achieve (nearly) information-theoretically optimal bounds for broad classes of states.
Let $\setX$ denote the class of states of interest, and assume that the underlying ground truth $\vrho^\star \in \setX$. For instance,  $\setX$ could contain all physical states (PSD and unit-trace) or be restricted to a specific structure with compact representations, such as low-rank or MPO states.
We then define $\vrho_{\text{PCS}}$ as the projection of $\vrho_{\text{CS}}$ on the set $\setX$ that minimizes Frobenius error, i.e.,
\begin{eqnarray}
    \label{projected classical shadow any set}
    \vrho_{\text{PCS}} =  \calP_{\setX}(\vrho_{\text{CS}}) := \argmin_{\vrho \in \setX}\norm{\vrho - \vrho_{\text{CS}}}{F}.
\end{eqnarray}

To provide a unified and general analysis of Eq.~\eqref{projected classical shadow any set}, we enlist tools from $\epsilon$-net and covering number theory to capture the complexity of the classes of states within the set $\setX$. First, consider the set $\calN =
\left\{\frac{\vrho}{\|\vrho\|_F} : \vrho\in\setX\right\}$ scaled to unit Frobenius norm.
For $\epsilon>0$, the set $\calN_\epsilon \subset \calN$ is said to be an $\epsilon$-net (or an $\epsilon$-cover) over $\calN$ if for all $\frac{\vrho}{\|\vrho\|_F}\in\calN$, there exists $\frac{\vrho'}{\|\vrho'\|_F}\in\calN_\epsilon$ such that $\left\|\frac{\vrho}{\|\vrho\|_F} - \frac{\vrho'}{\|\vrho'\|_F}\right\|_F\leq \epsilon$.  The size of an $\epsilon$-net with the smallest cardinality  is called the covering number of $\setX$, denoted by $N_\epsilon(\setX)$. Intuitively speaking, a covering number is the minimum number of balls of a specified radius $\epsilon$ to cover a given set entirely. Coverings are useful for managing the complexity of a large set: instead of directly analyzing the behavior of an uncountable number of points in $\calN$, we can analyze the finite number of points in $\calN_\epsilon$. The behavior of all points in $\calN$ is similar to that of the points in $\calN_\epsilon$, as each point in $\calN$ is close to some point in the covering.

Instead of the covering number $N_\epsilon(\setX)$, our analysis will rely on the covering number of the set $\overline\setX$ formed by the differences between the elements in $\setX$:
\begin{eqnarray}
\label{the set of difference of the target set}
\overline\setX = \bigg\{\vrho_1 - \vrho_2: \  \vrho_1,\vrho_2 \in \setX, \vrho_1\neq \vrho_2\bigg\}.
\end{eqnarray}
In many cases, the covering number $N_\epsilon(\overline\setX)$ can be upper bounded by $N_\epsilon^2(\setX)$.
Here we use $\overline \setX$ for convenience in the following.

The covering number when $\setX$ comprises all physical quantum states can be computed as $\log N_{\epsilon}(\overline \setX) = O(4^n\log\frac{9}{\epsilon} )$. By comparison, for quantum states with rank at most $r$, this reduces to $\log N_{\epsilon}(\overline \setX) = O(2^n r\log\frac{9}{\epsilon})$; when the density matrices are represented by MPOs with bond dimension $D$, the covering number can be further reduced to $\log N_{\epsilon}(\overline \setX) = O\left(4 n D^2 \log \frac{4n+\epsilon}{\epsilon}\right)$, as discussed in Sec.~\ref{sec: MPO Projected classical shadow}.

\begin{theorem}
\label{project CS:general} 
For a given $\vrho^\star \in \setX$, let $\vrho_{\textup{PCS}}$ be the projected CS in Eq.~\eqref{projected classical shadow any set}.  Then with probability at least $1- e^{- \Omega(\log N_{1/2}(\overline \setX))}$,
\begin{eqnarray}
    \label{error bound of projected classical shadow MPO F1}
    \|\vrho_{\textup{PCS}} - \vrho^\star\|_F \leq O\left(\sqrt{\frac{\log N_{1/2}(\overline\setX)}{M}}\right).
\end{eqnarray}
\end{theorem}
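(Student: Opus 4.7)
The plan is to combine optimality of the projection in Eq.~\eqref{projected classical shadow any set} with an $\epsilon$-net concentration argument over the normalized difference set $\overline{\calN} := \{\vZ/\|\vZ\|_F : \vZ \in \overline{\setX}\}$. First I exploit feasibility of $\vrho^\star$: $\|\vrho_{\textup{PCS}} - \vrho_{\textup{CS}}\|_F^2 \leq \|\vrho^\star - \vrho_{\textup{CS}}\|_F^2$. Expanding and canceling with $\vct{\Delta} := \vrho_{\textup{PCS}} - \vrho^\star \in \overline{\setX}$ gives $\|\vct{\Delta}\|_F^2 \leq 2 \langle \vct{\Delta}, \vrho_{\textup{CS}} - \vrho^\star\rangle$, and dividing by $\|\vct{\Delta}\|_F$ yields
\begin{equation}
\|\vct{\Delta}\|_F \leq 2\sup_{\vW \in \overline{\calN}} \langle \vW, \vrho_{\textup{CS}} - \vrho^\star\rangle.
\end{equation}
This reduces the Frobenius tomography error to a scalar concentration problem, with the supremum taken only over the structured set $\overline{\calN}$ -- which is where the assumptions on $\setX$ pay off.

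Next I control $\langle \vW, \vrho_{\textup{CS}} - \vrho^\star\rangle$ for each fixed $\vW \in \overline{\calN}$. Such a $\vW$ is Hermitian, of unit Frobenius norm, and trace-free (as a normalized difference of two unit-trace states). Using Haar (Weingarten) moment calculations for $\mU_m$ together with the Born outcome distribution, each summand $X_m := \langle \vW, \vrho_m - \vrho^\star\rangle = (2^n+1)\vphi_{m,j_m}^\dagger \vW \vphi_{m,j_m} - \langle \vW, \vrho^\star\rangle$ is mean-zero with $\E[X_m^2] = O(\|\vW\|_F^2) = O(1)$ and, more generally, $\E|X_m|^p \lesssim p!$; thus $X_m$ is sub-exponential with Orlicz norm $O(1)$. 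Bernstein's inequality then gives $\prob{|\langle \vW, \vrho_{\textup{CS}} - \vrho^\star\rangle| > t} \leq 2\exp(-cMt^2)$ for $t = O(1)$ and an absolute constant $c>0$.

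To pass to a uniform bound I cover $\overline{\calN}$ by a $1/2$-net $\calN_{1/2}$ of cardinality $N_{1/2}(\overline{\setX})$. Union-bounding the per-direction tail at $t = C\sqrt{\log N_{1/2}(\overline{\setX})/M}$ yields $\max_{\vW_0 \in \calN_{1/2}} |\langle \vW_0, \vE\rangle| \lesssim \sqrt{\log N_{1/2}(\overline{\setX})/M}$ on an event of probability at least $1 - \exp(-\Omega(\log N_{1/2}(\overline{\setX})))$, where $\vE := \vrho_{\textup{CS}} - \vrho^\star$. For a generic $\vW \in \overline{\calN}$ with nearest net point $\vW_0$, the residual $\vW - \vW_0$ lies in (a constant multiple of) a set with the same structural type as $\overline{\setX}$ -- for example rank $\leq 4r$ in the low-rank case and bond dimension $\leq 4D$ in the MPO case -- whose covering number is comparable to $N_{1/2}(\overline{\setX})$. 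Writing $T := \sup_{\vW \in \overline{\calN}} \langle \vW, \vE\rangle$, this produces a self-bounding inequality of the form $T \leq \max_{\vW_0 \in \calN_{1/2}} \langle \vW_0, \vE\rangle + \tfrac{1}{2} T$, which rearranges to $T \lesssim \sqrt{\log N_{1/2}(\overline{\setX})/M}$. Substituting into the projection bound completes the proof.

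The hard parts will be the sub-exponential tail bound and the closure of the net step. The former hinges on a cancellation specific to Haar-random shadows in trace-free directions: although $\|\vrho_m - \vrho^\star\|_F$ can be as large as $2^n$, the variance of $\langle \vW, \vrho_m\rangle$ along any trace-free $\vW$ with $\|\vW\|_F = 1$ is only $O(1)$, which is precisely why the covering must be taken over the difference set $\overline{\setX}$ rather than $\setX$ itself. The latter requires verifying, case by case, that the difference-of-differences set has covering number comparable to $N_{1/2}(\overline{\setX})$, so that the residual term does not spoil the final rate for the low-rank and MPO classes.
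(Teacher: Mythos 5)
Your proposal follows essentially the same route as the paper's proof: reduce $\|\vrho_{\textup{PCS}} - \vrho^\star\|_F$ to a supremum of the linear functional $\langle \cdot\,,\, \vrho_{\textup{CS}} - \vrho^\star\rangle$ over the normalized (trace-free, Hermitian) difference set, establish sub-exponential moment bounds $\E|X_m|^p \lesssim p!$ for each fixed direction via Haar-moment (unitary design) calculations exploiting exactly the trace-free cancellation you identify, and finish with a $1/2$-net union bound closed by the self-bounding inequality. The one noteworthy difference is your first step: deriving $\|\vct{\Delta}\|_F^2 \leq 2\langle \vct{\Delta}, \vrho_{\textup{CS}} - \vrho^\star\rangle$ from projection optimality is actually more robust than the paper's appeal to nonexpansiveness (which strictly requires convex $\setX$, whereas the theorem is later invoked for nonconvex rank- and bond-dimension-constrained sets), at the harmless cost of a factor of $2$ absorbed into the $O(\cdot)$.
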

The proof is given in {Appendix} \ref{proof of error bound for general state}. Here the set $\setX\subset\{\vrho\in\C^{2^n\times 2^n}: \vrho = \vrho^\dagger, \trace(\vrho) =1\}$ is any subspace of Hermitian, trace-one matrices (tan space in  Fig.~\ref{Illustration of projected classical shadow method}. The set $\setX$ will be specialized to PSD matrices only  (blue space in Fig.~\ref{Illustration of projected classical shadow method}) in \Cref{project CS conclusion full rank} and low-dimensional structures (green space in Fig.~\ref{Illustration of projected classical shadow method}) in \Cref{project CS conclusion,project CS MPO conclusion}.
\Cref{project CS:general} guarantees a stable recovery of the ground-truth $\vrho^\star$ with $\xi$-closeness in the Frobenius norm, provided that the number of Haar-random projective measurements $M$ satisfies $M\geq \Omega(\log N_{1/2}(\set \mX)/\xi^2)$, which scales linearly with the logarithm of the covering number. For structured sets $\setX$ that are nonconvex, such as MPO states, computing the optimal projection $\calP_{\setX}$ might be difficult or even NP-hard. For these cases, we can use numerical methods to compute an approximate projection $\wt\calP_{\setX}$ that we assume is $\alpha$-approximately optimal ($\alpha\geq 1$), satisfying
\begin{eqnarray}
\label{expansiveness property of approximate projection}
\wt\calP_{\setX} (\vrho) \in \setX, \quad \norm{\wt\calP_{\setX} (\vrho)  - \vrho}{F} \le \sqrt{\alpha} \norm{ \calP_{\setX} (\vrho) - \vrho}{F}
\end{eqnarray}
for any $\vrho$.
Denote by $\wt\vrho_{\text{PCS}} =  \wt\calP_{\setX}(\vrho_{\text{CS}})$ the PCS estimator obtained with this approximate projection. The following extends the results in \Cref{project CS:general} to $\wt\vrho_{\text{PCS}}$.

\begin{theorem}
\label{project CS:general approximation}
For a given $\vrho^\star \in \setX$, let $\wt\vrho_{\textup{PCS}}$ be the approximate  PCS estimator in Eq.~\eqref{expansiveness property of approximate projection}.  Then with probability at least $1- e^{- \Omega(\log N_{1/2}(\overline\setX))}$,
\begin{eqnarray}
    \label{error bound of projected classical shadow MPO F1 approximation}
    \|\wt\vrho_{\textup{PCS}} - \vrho^\star\|_F \leq O\left(\sqrt{\frac{\alpha\log N_{1/2}(\overline\setX)}{M}}\right).
\end{eqnarray}
\end{theorem}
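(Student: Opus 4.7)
The plan is to replay the proof of \Cref{project CS:general} while tracking the slack introduced by using an $\alpha$-approximate projection in place of the exact one, and then to absorb that slack back into the covering-based bound. Since $\vrho^\star\in\setX$, we have $\|\calP_{\setX}(\vrho_{\text{CS}})-\vrho_{\text{CS}}\|_F\leq\|\vrho^\star-\vrho_{\text{CS}}\|_F$, which combined with \eqref{expansiveness property of approximate projection} gives $\|\wt\vrho_{\text{PCS}}-\vrho_{\text{CS}}\|_F^2\leq\alpha\|\vrho^\star-\vrho_{\text{CS}}\|_F^2$. Expanding the left-hand side around $\vrho^\star$ and setting $\wt\Delta:=\wt\vrho_{\text{PCS}}-\vrho^\star$ yields the basic inequality
\begin{equation*}
\|\wt\Delta\|_F^2 \leq 2\langle\wt\Delta,\vrho_{\text{CS}}-\vrho^\star\rangle + (\alpha-1)\|\vrho_{\text{CS}}-\vrho^\star\|_F^2,
\end{equation*}
which reduces to the basic inequality behind \Cref{project CS:general} when $\alpha=1$. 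Importantly, $\wt\Delta\in\overline\setX$, since both $\wt\vrho_{\text{PCS}}$ and $\vrho^\star$ lie in $\setX$.

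The linear term is controlled exactly as in \Cref{project CS:general}. Because $\wt\Delta/\|\wt\Delta\|_F$ is a unit-Frobenius-norm element of $\overline\setX$, one has $|\langle\wt\Delta,\vrho_{\text{CS}}-\vrho^\star\rangle|\leq\|\wt\Delta\|_F\,S$ with $S:=\sup_{\vz\in\overline\setX,\,\|\vz\|_F=1}|\langle\vz,\vrho_{\text{CS}}-\vrho^\star\rangle|$. This supremum is handled by a standard $1/2$-net argument: Bernstein's inequality applied to $\frac{1}{M}\sum_m\langle\vz_i,\vrho_m-\vrho^\star\rangle$ at each point of a minimal $1/2$-net $\calN_{1/2}$ of the unit-Frobenius-norm subset of $\overline\setX$ (using $\|\vrho_m\|_F\lesssim 2^n$ from \eqref{one time classical shadow}), followed by a union bound over the $N_{1/2}(\overline\setX)$ net points and the usual doubling trick to lift the estimate from the net to the unit ball, gives $S\leq O\bigl(\sqrt{\log N_{1/2}(\overline\setX)/M}\bigr)$ with probability at least $1-e^{-\Omega(\log N_{1/2}(\overline\setX))}$.

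The residual $(\alpha-1)\|\vrho_{\text{CS}}-\vrho^\star\|_F^2$ is the main obstacle, since a direct appeal to the mean-square estimate \eqref{proof of expectation of MSE main result} would scale with the ambient dimension $4^n$ and destroy the covering rate. The idea is to reroute this term through $\overline\setX$ by combining \eqref{expansiveness property of approximate projection} with the exact-projection bound of \Cref{project CS:general} applied to $\vrho_P:=\calP_{\setX}(\vrho_{\text{CS}})$: writing $\|\vrho_{\text{CS}}-\vrho^\star\|_F^2=\|(\vrho_{\text{CS}}-\wt\vrho_{\text{PCS}})-\wt\Delta\|_F^2$, invoking $\|\vrho_{\text{CS}}-\wt\vrho_{\text{PCS}}\|_F^2\leq\alpha\|\vrho_P-\vrho_{\text{CS}}\|_F^2$, and handling the cross term $\langle\vrho_{\text{CS}}-\wt\vrho_{\text{PCS}},\wt\Delta\rangle$ once more via the supremum $S$ should collapse the residual into quantities of the form $\alpha\bigl(S^2+S\|\wt\Delta\|_F\bigr)$ that live inside the same covering-based framework.

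Substituting both estimates into the basic inequality produces a quadratic $\|\wt\Delta\|_F^2\leq C_1\sqrt{\alpha}\,S\,\|\wt\Delta\|_F+C_2\alpha\,S^2$, whose positive root is $\|\wt\Delta\|_F\leq O(\sqrt{\alpha}\,S)$; inserting the high-probability estimate on $S$ then yields the advertised $\|\wt\Delta\|_F\leq O\bigl(\sqrt{\alpha\log N_{1/2}(\overline\setX)/M}\bigr)$ on the same event. The delicate step really is the rerouting in the third paragraph---finding a chain of triangle-inequality and approximation manipulations that reexpresses $\|\vrho_{\text{CS}}-\vrho^\star\|_F$ in structured directions so that no factor of the ambient dimension survives beyond the covering logarithm; the multiplicative $\sqrt{\alpha}$ in the final bound is precisely the cost of this reabsorption and is what distinguishes \Cref{project CS:general approximation} from \Cref{project CS:general}.
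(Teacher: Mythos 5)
Your first two paragraphs are fine in themselves: the basic inequality $\|\wt\Delta\|_F^2 \le 2\langle\wt\Delta,\vrho_{\text{CS}}-\vrho^\star\rangle + (\alpha-1)\|\vrho_{\text{CS}}-\vrho^\star\|_F^2$ follows correctly from Eq.~\eqref{expansiveness property of approximate projection}, and the covering-plus-Bernstein control of the linear term is essentially the machinery of Appendix~\ref{proof of error bound for general state} (with one caveat: the summands must be controlled through the Haar-design moment computation of Eq.~(B6), $\E[|s_m|^a]\le 6\cdot 2^{a-2}a!$, not through the crude bound $\|\vrho_m\|_F\lesssim 2^n$ you invoke, which would give a tail of the form $e^{-\Omega(Mt^2/4^n)}$ and reintroduce the ambient dimension). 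The genuine gap is your third paragraph. The proposed rerouting is algebraically circular: writing $\vrho_{\text{CS}}-\vrho^\star=(\vrho_{\text{CS}}-\wt\vrho_{\text{PCS}})+\wt\Delta$ and noting $\langle\vrho_{\text{CS}}-\wt\vrho_{\text{PCS}},\wt\Delta\rangle=\langle\vrho_{\text{CS}}-\vrho^\star,\wt\Delta\rangle-\|\wt\Delta\|_F^2$, the expansion collapses back to exactly the basic inequality you started from, with the full-norm residual $(\alpha-1)\|\vrho_{\text{CS}}-\vrho^\star\|_F^2\sim(\alpha-1)4^n/M$ intact; nothing ever becomes structured. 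Worse, no manipulation can close this gap if it uses only Eq.~\eqref{expansiveness property of approximate projection}: that hypothesis is too weak. Sketch of a counterexample: take $\setX$ to be the pure states, $\vrho^\star=\ve_1\ve_1^\dagger$, $2^n\ll M\ll 4^n$, and $\alpha=4$. By Eq.~\eqref{proof of expectation of MSE main result} and \Cref{project CS conclusion}, $\|\vrho_{\text{CS}}-\calP_{\setX}(\vrho_{\text{CS}})\|_F\approx 2^n/\sqrt{M}$ with high probability, so the orthogonal pure state $\ve_2\ve_2^\dagger$, whose distance to $\vrho_{\text{CS}}$ is at most $\sqrt{2}+\|\vrho_{\text{CS}}-\vrho^\star\|_F\le 2\,\|\vrho_{\text{CS}}-\calP_{\setX}(\vrho_{\text{CS}})\|_F$, is an admissible output of a $4$-approximate projection; yet its error is $\sqrt{2}$, contradicting the claimed $O(\sqrt{\alpha\, 2^n/M})=o(1)$. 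Any valid proof must therefore exploit properties of the concrete projection beyond Eq.~\eqref{expansiveness property of approximate projection}.

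This is also where your route diverges from the paper's. The paper gives Theorem~\ref{project CS:general approximation} no standalone appendix proof; its intended argument is the one instantiated for the concrete projections in Appendices~\ref{Proof of error bound in projected CS} and \ref{Proof of project CS MPO conclusion}, and that argument never forms the basic inequality and never lets $\|\vrho_{\text{CS}}-\vrho^\star\|_F$ appear. One works throughout in the restricted seminorm $\|\mA\|_{F,\wh\setX}=\max_{\vrho\in\wh\setX}\langle\mA,\vrho\rangle$: since $\wt\vrho_{\text{PCS}}-\vrho^\star\in\overline\setX$, its Frobenius norm equals its restricted norm, and the quasi-optimality of the specific approximate projection is invoked \emph{directly in the restricted norm}---e.g.\ $\|\text{SVD}_{D}^{tt}(\vrho_{\text{CS}})-\vrho^\star\|_{F,2D}\le(1+\sqrt{n-1})\|\vrho_{\text{CS}}-\vrho^\star\|_{F,2D}$ in Eq.~(D2), and the factor-$2$ step in Eq.~(C3)---so that the $\sqrt{\alpha}$ factor only ever multiplies $\max_{\vrho\in\wh\setX}\langle\vrho_{\text{CS}}-\vrho^\star,\vrho\rangle$, which the covering and Bernstein argument bounds by $O(\sqrt{\log N_{1/2}(\overline\setX)/M})$. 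Restricted-norm quasi-optimality is a strictly stronger property than Eq.~\eqref{expansiveness property of approximate projection} (the paper asserts it for truncated eigenvalue decomposition and TT-SVD; supplying a rigorous verification is projection-specific work), and it is precisely the ingredient your proposal is missing. To repair your proof, replace your third paragraph by a verification that your $\wt\calP_{\setX}$ satisfies $\|\wt\calP_{\setX}(\vrho_{\text{CS}})-\vrho^\star\|_{F,\wh\setX}\le O(\sqrt{\alpha})\,\|\vrho_{\text{CS}}-\vrho^\star\|_{F,\wh\setX}$, and then conclude with your paragraph-two bound on $S$.
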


\subsection*{General physical states}
\label{general physical states}
We first specialize $\setX$ to all physical quantum states~\footnote{We chose the label ``simplex'' for this set since the eigenvalues $\{\lambda_k\}$ of all physical states define a standard simplex, i.e., $\lambda_k\geq 0$ and $\sum_k\lambda_k =1$.}:
\begin{eqnarray}
\label{Set of general quantum state}
\setX_{\text{simplex}} = \{ \vrho\in\C^{2^n\times 2^n}:  \vrho\succeq {\bm 0}, \trace(\vrho) = 1 \}.
\end{eqnarray}
For $\setX_\text{simplex}$, we can achieve the PCS projection in Eq.~\eqref{projected classical shadow any set} by performing an eigenvalue decomposition and projecting the eigenvalues to the simplex after the algorithm of Ref.~\cite{smolin2012efficient}.  Since the corresponding set $\overline\setX$ has covering number $\log N_\epsilon(\overline\setX) = O(4^n\log\frac{9}{\epsilon})$, we can plug this information into \Cref{project CS:general} to obtain recovery guarantee for $\calP_{\text{simplex}}(\vrho_{\text{CS}})$.

\begin{cor}
\label{project CS conclusion full rank}
For a given physical state $\vrho^\star\in\C^{2^n\times 2^n}$, we perform $M$ projective measurements to obtain the CS estimate $\vrho_{\textup{CS}}$. Then with probability at least $1  -  e^{- \Omega(4^n)}$, the projected classical shadow $\calP_{\textup{simplex}}(\vrho_{\textup{CS}})$ satisfies
\begin{eqnarray}
    \label{error bound of projected classical shadow F full rank}
    &&\|\calP_{\textup{simplex}}(\vrho_{\textup{CS}}) - \vrho^\star \|_F \leq O\left( \sqrt{\frac{4^n }{M}}  \right).
\end{eqnarray}
\end{cor}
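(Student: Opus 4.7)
The plan is to invoke Theorem~\ref{project CS:general} in the specialization $\setX = \setX_{\text{simplex}}$, so that the only work is to verify two things: (a) that the optimal projection $\calP_{\text{simplex}}$ admits a concrete computational recipe matching Eq.~\eqref{projected classical shadow any set}, and (b) that the covering number satisfies $\log N_{1/2}(\overline\setX_{\text{simplex}}) = \Theta(4^n)$, so that both the Frobenius error bound and the claimed failure probability come out with the correct dependence on $n$.

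For (a), I would observe that any Hermitian matrix $\vrho_{\textup{CS}}$ admits a spectral decomposition $\vrho_{\textup{CS}} = \sum_k \lambda_k \vv_k \vv_k^\dagger$. Because the Frobenius norm is unitarily invariant, the problem $\min_{\vrho \in \setX_{\textup{simplex}}} \|\vrho - \vrho_{\textup{CS}}\|_F$ decouples: the optimal $\vrho$ shares its eigenvectors with $\vrho_{\textup{CS}}$, and its eigenvalue vector $\vlambda$ is the Euclidean projection of $(\lambda_1,\ldots,\lambda_{2^n})$ onto the probability simplex $\{\vmu \in \R^{2^n}:\mu_k\ge 0,\ \sum_k \mu_k = 1\}$. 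The latter projection admits the efficient closed-form algorithm of Ref.~\cite{smolin2012efficient}, so $\calP_{\textup{simplex}}(\vrho_{\textup{CS}})$ is well-defined and computable exactly; no appeal to the approximate version of Theorem~\ref{project CS:general approximation} is needed.

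For (b), the real vector space of $2^n \times 2^n$ Hermitian matrices has real dimension $4^n$, and $\setX_{\textup{simplex}}$ sits inside the affine slice cut out by $\trace(\vrho) = 1$. A standard volumetric argument therefore yields $\log N_\epsilon(\setX_{\textup{simplex}}) \leq 4^n \log(3/\epsilon)$, and combining two $(\epsilon/2)$-nets of the unit-normalized class bounds the covering number of the difference set by $\log N_\epsilon(\overline\setX_{\textup{simplex}}) = O(4^n \log(9/\epsilon))$, which is the estimate quoted just before the corollary. Evaluating at $\epsilon = 1/2$ gives $\log N_{1/2}(\overline\setX_{\textup{simplex}}) = O(4^n)$; conversely, since $\setX_{\textup{simplex}}$ has nonempty relative interior in the trace-one slice, $\overline\setX_{\textup{simplex}}$ contains a Frobenius ball of positive radius, which supplies the matching lower bound $\log N_{1/2}(\overline\setX_{\textup{simplex}}) = \Omega(4^n)$.

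Plugging the resulting $\Theta(4^n)$ bound into the conclusion of Theorem~\ref{project CS:general} produces the Frobenius error $O(\sqrt{4^n/M})$ together with failure probability $e^{-\Omega(4^n)}$, which is exactly the statement of the corollary. The main (and fairly mild) obstacle is the bookkeeping for the covering number, in particular making sure that rescaling by $\|\vrho\|_F$ to pass from $\setX_{\textup{simplex}}$ to $\calN$, and then taking differences to form $\overline\setX_{\textup{simplex}}$, only alters the exponent in $\epsilon$ and absorbs into the constants inside the $O(\cdot)$; once these are handled, the remainder is a direct instantiation of Theorem~\ref{project CS:general}.
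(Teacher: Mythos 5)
Your proposal matches the paper's own treatment of this corollary: the paper obtains it exactly as you describe, by realizing the optimal projection $\calP_{\textup{simplex}}$ through an eigendecomposition followed by projection of the eigenvalues onto the probability simplex (after Ref.~\cite{smolin2012efficient}), and then plugging the covering-number estimate $\log N_{\epsilon}(\overline\setX_{\text{simplex}}) = O\left(4^n\log\frac{9}{\epsilon}\right)$ into \Cref{project CS:general}. The details you add---the decoupling argument showing the simplex projection is exactly optimal, and the matching lower bound $\log N_{1/2}(\overline\setX_{\text{simplex}}) = \Omega(4^n)$ (which, after normalization, follows because the normalized difference set is the full unit Frobenius sphere of trace-zero Hermitian matrices, and which is indeed needed to read off the failure probability $e^{-\Omega(4^n)}$ from the theorem as stated)---are asserted without proof in the paper, so your write-up is if anything more complete.
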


\subsection*{Low-rank states}
\label{sec: Low-rank Projected classical shadow}
We next explore the structure of pure or nearly pure quantum states characterized by low entropy and represented as low-rank density matrices.
Assuming $\vrho^\star$ has rank $r\leq2^n$, we can refine our attention to the set $\setX_r = \{\vrho\in\C^{2^n\times 2^n}: \vrho\succeq {\bm 0}, \trace(\vrho) = 1, \text{rank}(\vrho) = r\}$.
Denote $\calP_{\setX_r}(\cdot)$
as the optimal projection satisfying Eq.~\eqref{projected classical shadow any set}.  It follows from \Cref{project CS:general} and the covering number of the corresponding set $\log N_{\epsilon}(\overline \setX) = O(2^n r\log\frac{9}{\epsilon})$ that $\|\calP_{\setX_r}(\vrho_{\text{CS}}) - \vrho^\star \|_F\leq O( \sqrt{2^nr/M}  )$.

However, since we are unaware of an algorithm to perform the ideal projection $\calP_{\setX_r}(\cdot)$, we instead consider a two-step alternative to obtain the low-rank projected classical shadow (LR-PCS):
\begin{eqnarray}
    \label{projected classical shadow}
    \vrho_{\text{LR-PCS}} =   \calP_{\text{simplex}}(\calP_{\text{rank-$r$}}(\vrho_{\text{CS}})),
\end{eqnarray}
where $\calP_{\text{rank-$r$}}(\cdot)$ denotes the rank-$r$ projection obtained by setting all eigenvalues beyond the $r$-th largest eigenvalue to zero. We can show that $\vrho_{\text{LR-PCS}}$ shares a similar guarantee as $\calP_{\setX_r}(\vrho_{\text{CS}})$.
\begin{theorem}
\label{project CS conclusion}
Given $M$ Haar-random projective measurements on physical state $\vrho^\star\in\setX_r$,  with probability $1  -  e^{- \Omega(2^n r)}$ $\vrho_{\textup{LR-PCS}}$ defined in Eq.~\eqref{projected classical shadow} satisfies
\begin{eqnarray}
    \label{error bound of projected classical shadow F}
    &&\|\vrho_{\textup{LR-PCS}} - \vrho^\star \|_F \leq O\left( \sqrt{\frac{2^nr }{M}}  \right).
\end{eqnarray}
\end{theorem}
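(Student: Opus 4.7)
The plan is to derive \Cref{project CS conclusion} from \Cref{project CS:general approximation} by taking the feasible set to be $\setX_{\leq r}:=\{\vrho\in\C^{2^n\times 2^n}:\ \vrho\succeq\vzero,\ \trace(\vrho)=1,\ \rank(\vrho)\leq r\}$, verifying that the explicit map $\vrho_{\textup{LR-PCS}}=\calP_{\textup{simplex}}(\calP_{\textup{rank-}r}(\vrho_{\textup{CS}}))$ is an $\alpha$-approximate projection onto $\setX_{\leq r}$ with $\alpha=O(1)$, and plugging in the covering-number bound $\log N_{1/2}(\overline{\setX_{\leq r}})=O(2^n r)$ that already appears for rank-restricted Hermitian sets in the discussion preceding the theorem. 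Note that $\vrho^\star\in\setX_r\subseteq\setX_{\leq r}$, so the framework of \Cref{project CS:general approximation} applies.

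First I would check that $\vrho_{\textup{LR-PCS}}$ actually lies in $\setX_{\leq r}$: the truncation $\calP_{\textup{rank-}r}$ retains only the $r$ eigenvalues of $\vrho_{\textup{CS}}$ largest in magnitude and yields a Hermitian matrix with at most $r$ nonzero eigenvalues, while the simplex projection acts on eigenvalues alone (reusing the same eigenvectors and pushing the spectrum onto the probability simplex), so it cannot increase the rank and automatically produces a PSD, unit-trace output. Next, I would establish the $\alpha$-approximation property. Write $X:=\calP_{\textup{rank-}r}(\vrho_{\textup{CS}})$ and let $\hat\vrho:=\calP_{\setX_{\leq r}}(\vrho_{\textup{CS}})$ denote the ideal (possibly intractable) projection. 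By Eckart--Young, $X$ is the closest rank-$\leq r$ Hermitian matrix to $\vrho_{\textup{CS}}$ in Frobenius norm, so $\|X-\vrho_{\textup{CS}}\|_F\leq\|\hat\vrho-\vrho_{\textup{CS}}\|_F$. Because $\setX_{\textup{simplex}}$ is convex and $\hat\vrho\in\setX_{\textup{simplex}}$, the minimality of $\calP_{\textup{simplex}}$ yields
\begin{eqnarray*}
\|\calP_{\textup{simplex}}(X)-X\|_F &\leq& \|\hat\vrho-X\|_F\\
&\leq& \|\hat\vrho-\vrho_{\textup{CS}}\|_F+\|\vrho_{\textup{CS}}-X\|_F\\
&\leq& 2\|\hat\vrho-\vrho_{\textup{CS}}\|_F.
\end{eqnarray*}
A final triangle inequality then gives $\|\vrho_{\textup{LR-PCS}}-\vrho_{\textup{CS}}\|_F\leq\|\calP_{\textup{simplex}}(X)-X\|_F+\|X-\vrho_{\textup{CS}}\|_F\leq 3\|\hat\vrho-\vrho_{\textup{CS}}\|_F$, which matches Eq.~\eqref{expansiveness property of approximate projection} with $\alpha=9$.

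With the approximation factor in hand, the conclusion follows by invoking \Cref{project CS:general approximation} with $\setX=\setX_{\leq r}$, $\alpha=9$, and $\log N_{1/2}(\overline{\setX_{\leq r}})=O(2^n r)$ (the latter inherited from the covering-number count of rank-$\leq 2r$ Hermitian matrices cited just before the theorem), yielding $\|\vrho_{\textup{LR-PCS}}-\vrho^\star\|_F\leq O(\sqrt{2^n r/M})$ with probability at least $1-e^{-\Omega(2^n r)}$. The main obstacle is the approximation-quality step: it forces one to interleave the unconstrained best-rank property of eigenvalue truncation (Eckart--Young alone ignores the PSD and trace constraints) with the non-expansiveness of the convex simplex projection (which alone does not preserve rank), and to verify that the two-step composition still lands inside $\setX_{\leq r}$. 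Once that $O(1)$ factor is secured, the covering-number accounting and the concentration tail are delivered automatically by the reduction to \Cref{project CS:general approximation}.
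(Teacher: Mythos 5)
Your reduction to \Cref{project CS:general approximation} with $\setX=\setX_{\leq r}$ requires, by Eq.~\eqref{expansiveness property of approximate projection}, that the two-step map actually lands in $\setX_{\leq r}$, and your argument for this---``the simplex projection acts on eigenvalues alone \ldots so it cannot increase the rank''---is false. Projecting the spectrum $(\lambda_1,\dots,\lambda_r,0,\dots,0)$ of $X:=\calP_{\textup{rank-}r}(\vrho_{\textup{CS}})$ onto the probability simplex maps $\lambda_k\mapsto\max(\lambda_k-\theta,0)$ with $\theta$ fixed by the unit-sum constraint; whenever $\sum_{k\leq r}\max(\lambda_k,0)<1$, the solution has $\theta<0$, and then every one of the $2^n-r$ zero eigenvalues is lifted to $-\theta>0$, so $\calP_{\textup{simplex}}(X)$ has full rank. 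This is generic, not exceptional: $\trace(X)$ fluctuates around $1$ and falls below it with constant probability (e.g.\ for $r=1$, any realization whose top eigenvalue is $0.95$ triggers it). Hence $\vrho_{\textup{LR-PCS}}\notin\setX_{\leq r}$ in general, and \Cref{project CS:general approximation} cannot be invoked with this $\setX$. Nor is membership a cosmetic requirement: the covering mechanism behind \Cref{project CS:general,project CS:general approximation} controls $\|\wt\vrho_{\textup{PCS}}-\vrho^\star\|_F$ precisely because that difference lies in the low-complexity set $\overline\setX$; once the estimator may be full rank, its deviation from $\vrho^\star$ is a generic full-rank matrix, and an $e^{O(2^nr)}$-size net certifies nothing about its Frobenius norm. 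Your $\alpha=9$ distance computation is correct, but it establishes only half of the definition of an $\alpha$-approximate projection.

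The repair---and this is what the paper's proof does---is to bound the error of the simplex step without ever asserting that its output is low rank. Since $\setX_{\textup{simplex}}$ is convex and $\vrho^\star$ is a fixed point of $\calP_{\textup{simplex}}$, nonexpansiveness of projection onto a convex set gives $\|\vrho_{\textup{LR-PCS}}-\vrho^\star\|_F\leq\|X-\vrho^\star\|_F$ directly, whatever the rank of $\vrho_{\textup{LR-PCS}}$ (your own inequality $\|\calP_{\textup{simplex}}(X)-X\|_F\leq\|\vrho^\star-X\|_F$ plus a triangle inequality yields the same up to a factor $2$). Now the matrix $X-\vrho^\star$ genuinely has rank at most $2r$, and squaring the Eckart--Young inequality $\|X-\vrho_{\textup{CS}}\|_F\leq\|\vrho^\star-\vrho_{\textup{CS}}\|_F$ gives $\|X-\vrho^\star\|_F^2\leq 2\langle\vrho_{\textup{CS}}-\vrho^\star,\,X-\vrho^\star\rangle$, hence $\|X-\vrho^\star\|_F\leq 2\max\{\langle\vrho_{\textup{CS}}-\vrho^\star,\mZ\rangle:\mZ=\mZ^\dagger,\ \rank(\mZ)\leq 2r,\ \|\mZ\|_F\leq 1\}$. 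That restricted quantity is exactly what the rank-$2r$ covering net of size $e^{O(2^nr)}$ and the Bernstein tail inside the proof of \Cref{project CS:general} control, giving $O(\sqrt{2^nr/M})$ with probability $1-e^{-\Omega(2^nr)}$. So your covering-number and concentration accounting is sound, but the argument must route through nonexpansiveness toward $\vrho^\star$ (as in the paper) rather than through \Cref{project CS:general approximation}.
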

The detailed proof appears in {Appendix} \ref{Proof of error bound in projected CS}. This theoretical recovery error is optimal, given that the degrees of freedom for the ground truth $\vrho^\star$ are $O(2^nr)$. This highlights that LR-PCS can achieve the optimal solution in QST using independent measurements, without requiring multiple iterations of optimization algorithms.

To compare LR-PCS with prior results, we convert the result of \Cref{project CS conclusion} to trace norm leveraging the inequality between the Frobenius and the trace norms~\cite{haah2017sample}, namely $\|\vrho_{\text{LR-PCS}} - \vrho^\star \|_1 \leq \sqrt{2r}\|\vrho_{\text{LR-PCS}} - \vrho^\star \|_F \leq O( \sqrt{2^nr^2/M} )$, which matches the optimal guarantee (up to small log terms) with independent measurements according to \cite{haah2017sample}. We have summarized the comparison in Table~\ref{Methods comparison LR}.
\begin{table}[!htbp]
\renewcommand{\arraystretch}{1.2}
\begin{center}
\caption{Comparing  the total number of state copies in PCS using single-shot global Haar unitaries to that in optimal QST. Here, $n$ denotes the number of qubits, $r$ represents the rank of the target state, and $\xi$ signifies the desired precision in trace distance, i.e., $\|\wh\vrho - \vrho^\star \|_1\leq \xi$ for estimator $\wh\vrho$.}
\label{Methods comparison LR}
{\begin{tabular}{|c||c|c|} \hline {Methods} & {$\vrho^\star\in\setX_\text{simplex}$} & { $\vrho^\star\in\setX_r$}
\\\hline
{PCS} &  $\Omega(8^n /\xi^2)$ &  $\Omega(2^nr^2 /\xi^2)$   \\
\hline
{Optimal QST \cite{haah2017sample}} &   $\Omega(8^n /\xi^2)$  &  $\Omega(2^nr^2 /(\xi^2\log(1/\xi)))$  \\
\hline
\end{tabular}}
\end{center}
\end{table}

\subsection*{MPO states}
\label{sec: MPO Projected classical shadow}
While the computational and storage requirements for low-rank density matrices are significantly smaller compared to general ones, they still grow exponentially in the number of qubits $n$. Moreover, the assumption of high purity on which the low-rank approximation is based becomes increasingly tenuous in practice for existing processors in the noisy intermediate-scale quantum (NISQ) era. For this reason, reducing parameter count through alternative assumptions is worth pursuing.
Examples such as ground states of many quantum systems with short-range interactions and states generated by such systems within a finite duration \cite{eisert2010} often possess entanglement localized to subsystems of the entire quantum computer. Consequently, they can be compactly represented using MPOs, whose degrees of freedom scale only polynomially in $n$. To assist in the development of an MPO-PCS method, we will first establish their connection to tensor train (TT) decompositions \cite{Oseledets11}, a technique widely utilized in signal processing and machine learning.

For a $n$-qubit density matrix $\vrho^\star\in\C^{2^n\times 2^n}$,
we employ a single index array $i_1\cdots i_\nqbit$ ($j_1\cdots j_\nqbit$) to denote the row (column) indices, where $i_1,\ldots,i_\nqbit\in \{1,2\}$ \footnote{ Specifically, $i_1\cdots i_n$ represents the $(i_1+\sum_{\ell=2}^n 2^{\ell-1}(i_\ell-1))$-th row.}.  We designate $\vrho^\star$ as an MPO if we can represent its $(i_1\cdots i_\nqbit,j_1\cdots j_\nqbit)$-th element using the following matrix product~\cite{werner2016positive}:
\begin{eqnarray}
\label{DefOfMPO}
\vrho^\star(i_1 \cdots i_\nqbit, j_1 \cdots j_\nqbit)  =  \mX_1^{i_1,j_1} \mX_2^{i_2,j_2} \cdots \mX_\nqbit^{i_\nqbit,j_\nqbit},
\end{eqnarray}
where $\mX_\ell^{i_\ell,j_\ell}\in \C^{D\times D}$ for $\ell \in\{ 2,\dots,n-1\}$,  $\mX_1^{i_1,j_1}\in \C^{1\times D}$, $\mX_n^{i_n,j_n}\in \C^{D\times 1}$, and $D$ is the {\it bond dimension}, and thus we can introduce the set of physical MPO states with bond dimension $D$ as
\begin{eqnarray}
\label{SetOfMPO}
\setX_{D}= \Big\{ \vrho\in\C^{2^n\times 2^n}:\  \vrho\succeq {\bm 0}, \trace(\vrho) = 1, \atop \text{bond dimension}(\vrho) = D\Big\}.
\end{eqnarray}
Here the corresponding difference set $\overline\setX$ has  covering number $\log N_{\epsilon}(\overline \setX) = O\left(4 n D^2 \log \frac{4n+\epsilon}{\epsilon}\right)$, which is proportional to the degrees of freedom $O(4 n D^2)$ in the MPO states.
Given the optimal projection $\calP_{\setX_{D}}(\cdot)$, it follows from \Cref{project CS:general} that $\|\calP_{\ \setX_{D}}\left(\vrho_{\text{CS}}\right) - \vrho^\star\|_F \leq O\left(\sqrt{n D^2 \log n/M}\right)$.

However, we have been unable to implement the optimal $\calP_{\setX_D}(\cdot)$ due to the difficulty in satisfying both the MPO and simplex conditions simultaneously. Therefore, we introduce a quasi-optimal projection based on a sequential singular value decomposition (SVD) algorithm, commonly referred to as tensor train SVD (TT-SVD) \cite{Oseledets11}. Based on tensor-matrix equivalence, we can design a two-step MPO PCS method:
\begin{eqnarray}
    \label{projected classical shadow in the MPO}
    \vrho_{\text{MPO-PCS}} =  \calP_{\text{simplex}}( \text{SVD}_{D}^{tt}(\vrho_{\text{CS}})),
\end{eqnarray}
where $\text{SVD}_{D}^{tt}(\cdot)$ denotes the TT-SVD operation.
It is worth noting that the bond dimension of $\vrho_{\text{MPO-PCS}}$ may differ slightly from $D$ due to the simplex projection, but the recovery error still depends on $D$. We analyze the recovery error of Eq.~\eqref{projected classical shadow in the MPO} as follows:
\begin{theorem}
\label{project CS MPO conclusion}
Consider an MPO state $\vrho^\star\in\setX_D$, measured $M$ times with Haar-random projections. With $\vrho_{\textup{MPO-PCS}}$ defined as in Eq.~\eqref{projected classical shadow in the MPO}, with probability $1  -  e^{- \Omega(nD^2 \log n)}$ we have
\begin{eqnarray}
    \label{error bound of projected classical shadow MPO F}
    \|\vrho_{\textup{MPO-PCS}} - \vrho^\star\|_F \leq O\left(\sqrt{\frac{n^2D^2 \log n}{M}}\right).
\end{eqnarray}
\end{theorem}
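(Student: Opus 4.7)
The plan is to invoke Theorem~\ref{project CS:general approximation} with $\setX = \setX_D$. Since it is stated earlier in the excerpt that $\log N_{1/2}(\overline\setX_D) = O(nD^2\log n)$, the conclusion of that theorem already reduces the target bound $O(\sqrt{n^2 D^2 \log n / M})$ to the task of verifying the approximate-projection condition in Eq.~\eqref{expansiveness property of approximate projection} with approximation factor $\alpha = O(n)$ for the composed map $\wt\calP := \calP_{\text{simplex}} \circ \text{SVD}_D^{tt}$. Everything else, including the failure probability $e^{-\Omega(nD^2\log n)}$, comes for free from Theorem~\ref{project CS:general approximation}.

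First I would import the standard quasi-optimality of the tensor-train SVD from Ref.~\cite{Oseledets11}: for any Hermitian matrix $\vrho$ viewed as an order-$2n$ tensor,
\[
\|\text{SVD}_D^{tt}(\vrho) - \vrho\|_F \;\leq\; \sqrt{n-1}\,\min_{\vrho'\in\calT_D}\|\vrho' - \vrho\|_F,
\]
where $\calT_D$ denotes the set of (not necessarily physical) TT-states with bond dimension at most $D$. Because $\calP_{\setX_D}(\vrho)\in\setX_D\subset\calT_D$, the right-hand side is at most $\sqrt{n-1}\,\|\calP_{\setX_D}(\vrho) - \vrho\|_F$. This reduces the analysis to controlling the additional distortion introduced by the simplex step.

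Next I would handle the simplex projection by a triangle-inequality argument. Let $Y = \text{SVD}_D^{tt}(\vrho)$ and observe that $\calP_{\setX_D}(\vrho)$ is a physical state, so it lies in the convex set onto which $\calP_{\text{simplex}}$ projects. Non-expansiveness of Euclidean projection onto a convex set, applied with $\calP_{\setX_D}(\vrho)$ as the reference point, then yields $\|\calP_{\text{simplex}}(Y) - \calP_{\setX_D}(\vrho)\|_F \le \|Y - \calP_{\setX_D}(\vrho)\|_F$. Two triangle-inequality steps combined with the TT-SVD bound above give
\[
\|\wt\calP(\vrho) - \vrho\|_F \;\leq\; (\sqrt{n-1}+2)\,\|\calP_{\setX_D}(\vrho) - \vrho\|_F,
\]
so $\alpha = O(n)$ as required. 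Substituting into Theorem~\ref{project CS:general approximation} with $\log N_{1/2}(\overline\setX_D) = O(nD^2 \log n)$ produces Eq.~\eqref{error bound of projected classical shadow MPO F}.

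The main obstacle is the caveat already flagged in the paper: after the simplex projection $\vrho_{\text{MPO-PCS}}$ need not lie in $\setX_D$, because the eigenvectors of a bond-$D$ MPO are generally not themselves representable as bond-$D$ MPOs, so eigenvalue clipping can enlarge the effective bond dimension. To keep Theorem~\ref{project CS:general approximation} applicable I would enlarge the target set to $\setX'_D := \{\calP_{\text{simplex}}(\vrho)\;:\; \vrho\in\calT_D\cap\{\vrho^\dagger=\vrho\}\}\cup\setX_D$, whose elements inherit a parametrization of size $O(nD^2)$ from their TT factors; a covering-number estimate identical in form to the one used for $\setX_D$ then gives $\log N_{1/2}(\overline{\setX'_D}) = O(nD^2 \log n)$, preserving both the sample-complexity bound and the stated failure probability. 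The rest of the proof is then a direct substitution.
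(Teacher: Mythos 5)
Your reduction to Theorem~\ref{project CS:general approximation} breaks down at exactly the point you flagged, and the patch you propose does not repair it. The hypothesis Eq.~\eqref{expansiveness property of approximate projection} ties three things to the \emph{same} set $\setX$: membership $\wt\calP_{\setX}(\vrho)\in\setX$, the approximation inequality measured against $\calP_{\setX}$, and the covering number $N_{1/2}(\overline\setX)$ appearing in the bound. Your chain of inequalities (TT-SVD quasi-optimality, nonexpansiveness of the simplex projection, two triangle inequalities) is correct, but it establishes the ratio only with respect to $\setX_D$, for which membership fails. After enlarging to $\setX'_D$, membership holds but the approximation inequality must now be measured against $\calP_{\setX'_D}$, and there it fails outright: for any $\vrho\in\setX'_D\setminus\setX_D$ we have $\|\calP_{\setX'_D}(\vrho)-\vrho\|_F=0$, while $\calP_{\text{simplex}}(\text{SVD}_{D}^{tt}(\vrho))\neq\vrho$ in general (TT-SVD at bond dimension $D$ alters any input whose bond dimension exceeds $D$), so \emph{no} finite $\alpha$ satisfies Eq.~\eqref{expansiveness property of approximate projection} over $\setX'_D$. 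Since the distance to the enlarged set can be arbitrarily smaller than the distance to $\setX_D$, the inequality you proved does not transfer, and Theorem~\ref{project CS:general approximation} is never actually applicable.

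There is also a deeper obstruction: even if you reproved Theorem~\ref{project CS:general approximation} under what you actually verify, namely $\hat\vrho:=\calP_{\text{simplex}}(\text{SVD}_{D}^{tt}(\vrho_{\text{CS}}))\in\setX'_D$ together with the full-norm near-optimality $\|\hat\vrho-\vrho_{\text{CS}}\|_F\leq(\sqrt{n-1}+2)\|\vrho^\star-\vrho_{\text{CS}}\|_F$, the target bound cannot follow, because full-Frobenius-norm near-optimality with factor $\sqrt{\alpha}>1$ is too weak. The standard expansion $\|\hat\vrho-\vrho^\star\|_F^2=\|\hat\vrho-\vrho_{\text{CS}}\|_F^2-\|\vrho_{\text{CS}}-\vrho^\star\|_F^2+2\<\hat\vrho-\vrho^\star,\vrho_{\text{CS}}-\vrho^\star\>$ leaves an additive term $(\alpha-1)\|\vrho_{\text{CS}}-\vrho^\star\|_F^2=\Theta(n\,4^n/M)$ by Eq.~\eqref{proof of expectation of MSE main result}, which swamps $n^2D^2\log n/M$; this is also why treating the unproved Theorem~\ref{project CS:general approximation} as a black box is risky here. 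The paper's proof avoids both problems by never leaving the restricted norm: it replaces the simplex step by the trace projection via $\calP_{\text{simplex}}=\calP_{\text{simplex}}\circ\calP_{\trace}$ and convex nonexpansiveness, so the intermediate object $\calP_{\trace}(\text{SVD}_{D}^{tt}(\vrho_{\text{CS}}))$ remains an MPO of bond dimension at most $D+1$ and no enlarged set is needed, and it then applies the TT-SVD guarantee in restricted form, $\|\text{SVD}_{D}^{tt}(\vrho_{\text{CS}})-\vrho^\star\|_{F,2D}\leq(1+\sqrt{n-1})\|\vrho_{\text{CS}}-\vrho^\star\|_{F,2D}$, so that the $\sqrt{n}$ penalty multiplies the covering-controlled quantity $O(\sqrt{nD^2\log n/M})$ rather than $\|\vrho_{\text{CS}}-\vrho^\star\|_F\approx\sqrt{4^n/M}$. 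That restricted-norm form of the quasi-optimality step is the key idea your route is missing.
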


\begin{figure*}[!th]
\centering
\subfigure{
\begin{minipage}[t]{0.31\textwidth}
\centering
\includegraphics[width=5.5cm]{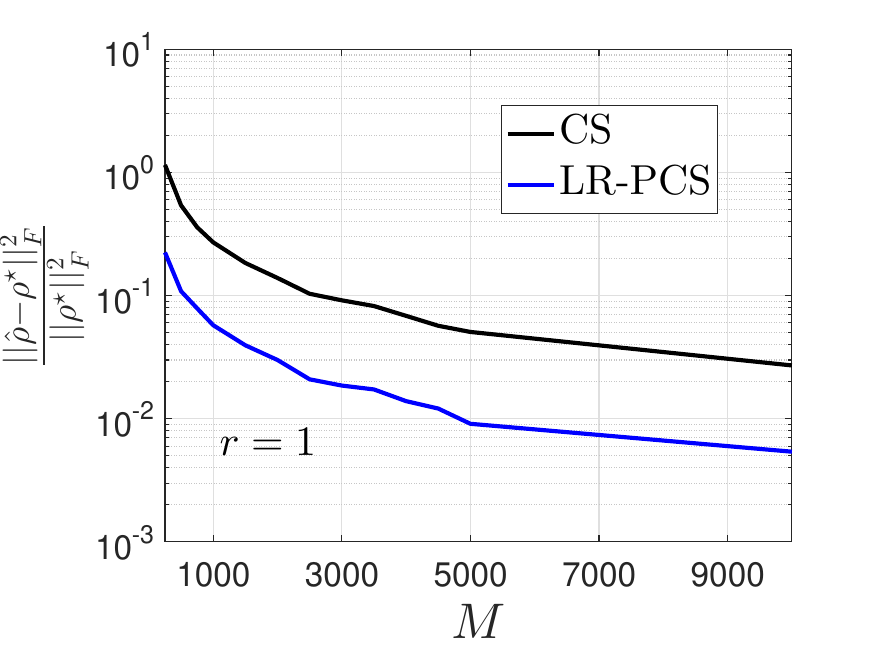}
\end{minipage}
}
\subfigure{
\begin{minipage}[t]{0.31\textwidth}
\centering
\includegraphics[width=5.5cm]{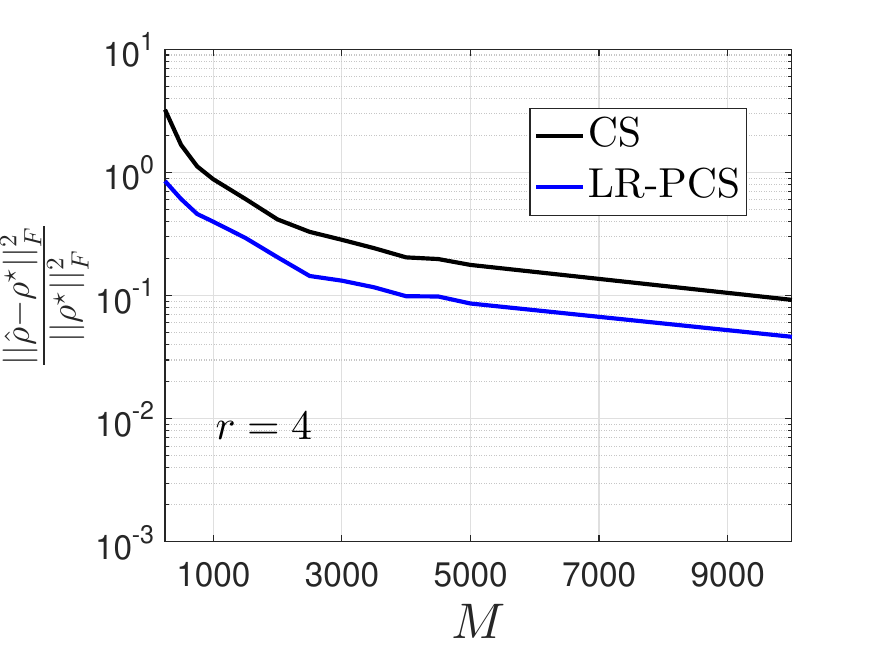}
\end{minipage}
}
\subfigure{
\begin{minipage}[t]{0.31\textwidth}
\centering
\includegraphics[width=5.5cm]{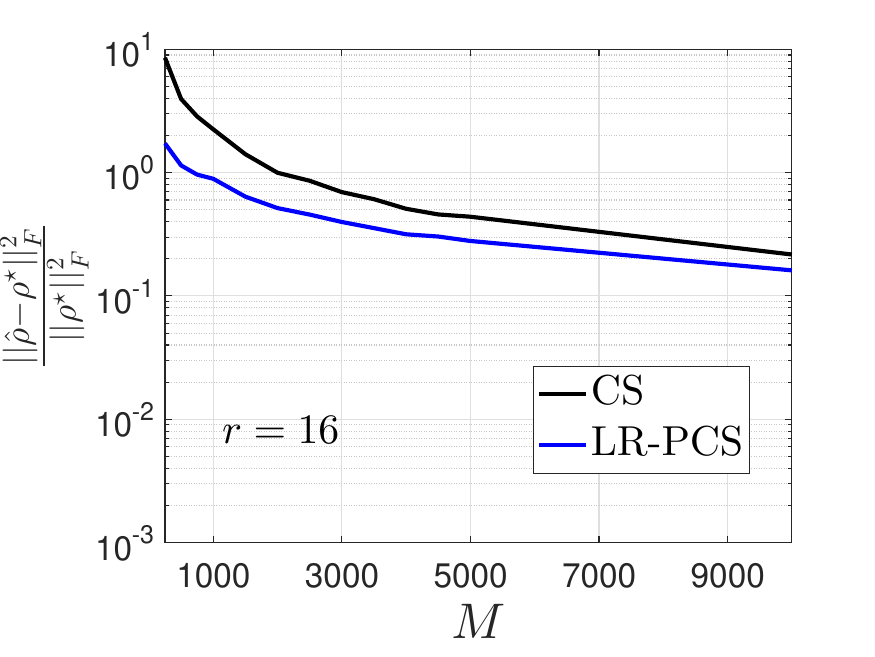}
\end{minipage}
}
\caption{Mean squared error as a function of state copies $M$ for CS and LR-PCS methods on $n=4$ qubits, averaged over trials on ten randomly chosen ground truth states for each rank $r\in\{1,4,16\}$. The figures span $M=250$ to $M=10000$.}
\label{comparison between diff methods in full density matrix}
\end{figure*}

The proof can be found in {Appendix} \ref{Proof of project CS MPO conclusion}. Note that due to the quasi-optimality of the TT-SVD, the upper bound of Eq.~\eqref{error bound of projected classical shadow MPO F} is not optimal when considering the degrees of freedom $O(4n D^2)$ in $\vrho^\star$. To our knowledge, there exists no method that can guarantee both MPO and PSD constraints simultaneously.
Should such an optimal MPO projection be found, however, we could potentially remove one factor of $n$ in the numerator of Eq.~\eqref{error bound of projected classical shadow MPO F}, thus ensuring exact MPO rank.
In Table~\ref{Methods comparison MPO}, we summarize the total number of state copies required for MPO-PCS compared to existing QST methods. It is important to highlight that the QST results represent sufficient, rather than necessary, conditions. Compared with these results, MPO-PCS still demonstrates favorable performance.

\begin{table}[!htbp]
\renewcommand{\arraystretch}{1.2}
\begin{center}
\caption{Total number of copies in MPO-PCS compared to MPO-based QST using Haar measures, and spherical $3$-designs. Here, $n$ denotes the number of qubits, and $\zeta$ signifies the desired precision in Frobenius distance, i.e., $\|\wh\vrho - \vrho^\star \|_F\leq \zeta$ for estimator $\wh\vrho$.}
\label{Methods comparison MPO}
{\begin{tabular}{|c||c|} \hline {Method} &  { $\vrho\in\setX_D$}
\\\hline
{Approximate PCS (Haar)} &  $\Omega(n^2D^2 \log n /\zeta^2)$   \\
\hline
{Optimal PCS (Haar)} &  $\Omega(nD^2 \log n /\zeta^2)$   \\
\hline
{QST (Haar) \cite{qin2024quantum}} &   $\Omega(n^3D^2 \log n /\zeta^2)$    \\
\hline
{QST (spherical $3$-designs) \cite{qin2024sample} } &   $\Omega(nD^2 \log n /\zeta^2)$    \\
\hline
\end{tabular}}
\end{center}
\end{table}

\section{Simulation Results}
\label{sec: experiments}
In this section,  we conduct numerical QST experiments with Haar-random projective measurements to compare CS, LR-PCS, and MPO-PCS methods.  For each configuration, we conduct 10 Monte Carlo tomographic experiments in which each Haar measurement and result are sampled at random; then we take the average over all 10 trials to report the results. For the random state cases (Figs.~\ref{comparison between diff methods in full density matrix},\ref{MPS T2 MSE diff summary}), each trial corresponds to a different randomly chosen ground truth, whereas for the tailored state cases (Figs.~\ref{thermal state and GHZ state MSE},\ref{F norm for different n}), each trial in a given average is performed on the same ground truth.

In the first set of tests, we compare CS and LR-PCS for a specific rank $r$ as a function of measurements $M$. We generate random ground-truth density matrices $\vrho^\star = \mF^\star {\mF^\star}^\dagger\in\C^{16\times 16}$ ($n=4$ qubits), where $\mF^\star = \frac{ \mA^\star + \imag\mB^\star}{\|\mA^\star + \imag\mB^\star\|_F}\in \C^{16\times r}$, and the entries of $\mA^\star$ and $\mB^\star$ are independent and identically distributed (i.i.d.) samples drawn from the standard normal distribution. Notably, when $r = 16$, LR-PCS reduces to projection onto the set of general physical states defined in Eq.~\eqref{Set of general quantum state}. The results in Fig.~\ref{comparison between diff methods in full density matrix} for rank $r\in\{1,4,16\}$ reveal two key observations: $(i)$ as the rank $r$ decreases and the number of measurements $M$ increases, the recovery error across all methods consistently reduces, with the squared error quantitatively scaling as expected ($4^n/M$ for CS and $2^nr/M$ for LR-PCS); and $(ii)$ for any $r$ and $M$,  LR-PCS outperforms standard CS (even at full rank), as it preserves physicality under any rank constraints.

\begin{figure}[!ht]
\centering
\subfigure{
\begin{minipage}[t]{0.22\textwidth}
\centering
\includegraphics[width=4.5cm]{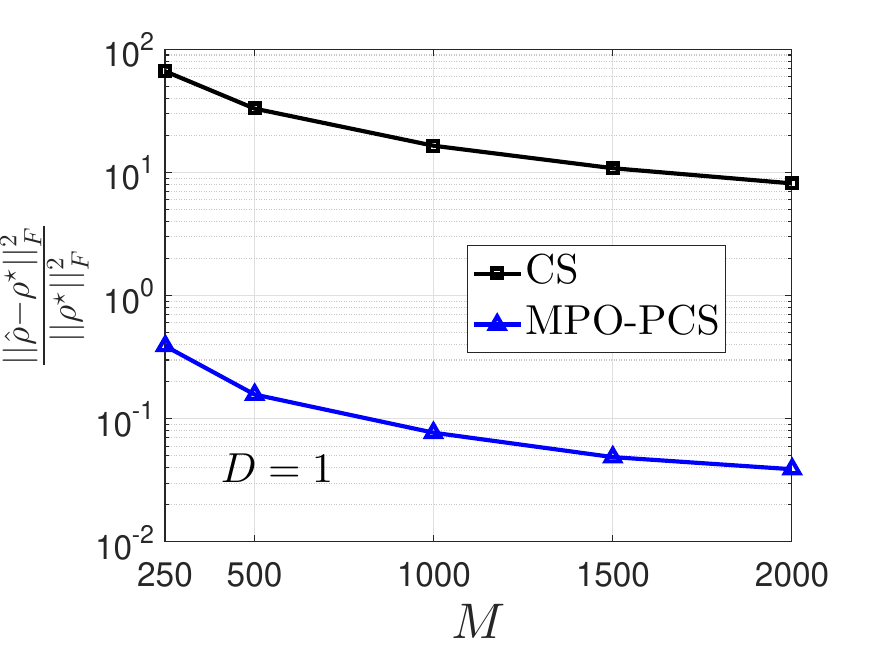}
\end{minipage}
\label{MPS T2 MSE diff1}
}
\subfigure{
\begin{minipage}[t]{0.22\textwidth}
\centering
\includegraphics[width=4.5cm]{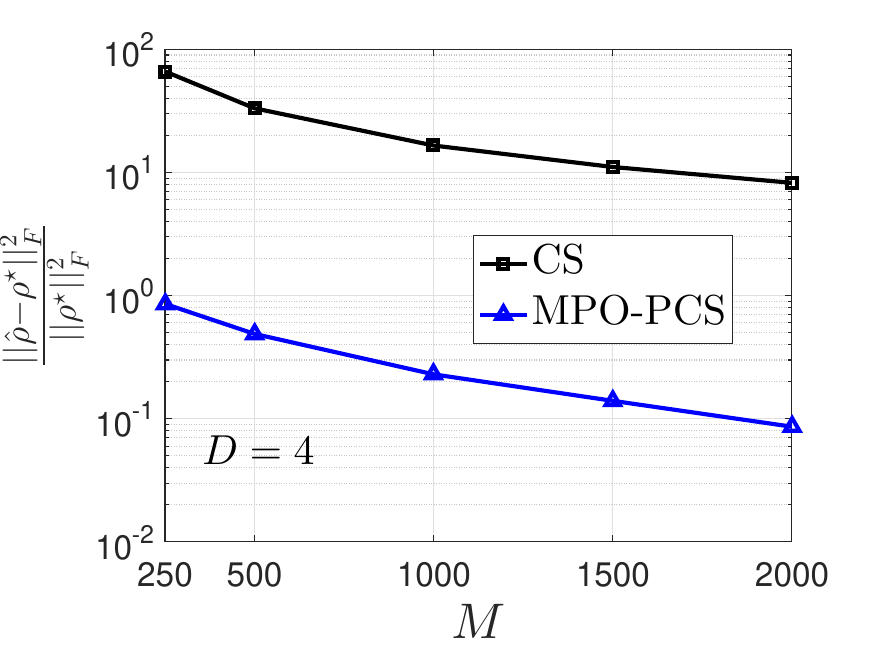}
\end{minipage}
\label{MPS T2 MSE diff2}
}
\caption{Mean squared error as a function of state copies $M$ for CS and MPO-PCS methods on seven-qubit MPO states, where each point is an average over trials on ten randomly chosen ground truth states for each bond dimension $D\in\{1,4\}$.}
\label{MPS T2 MSE diff summary}
\end{figure}

\begin{figure*}[!ht]
\centering
\subfigure[]{
\begin{minipage}[t]{0.31\textwidth}
\centering
\includegraphics[width=5.5cm]{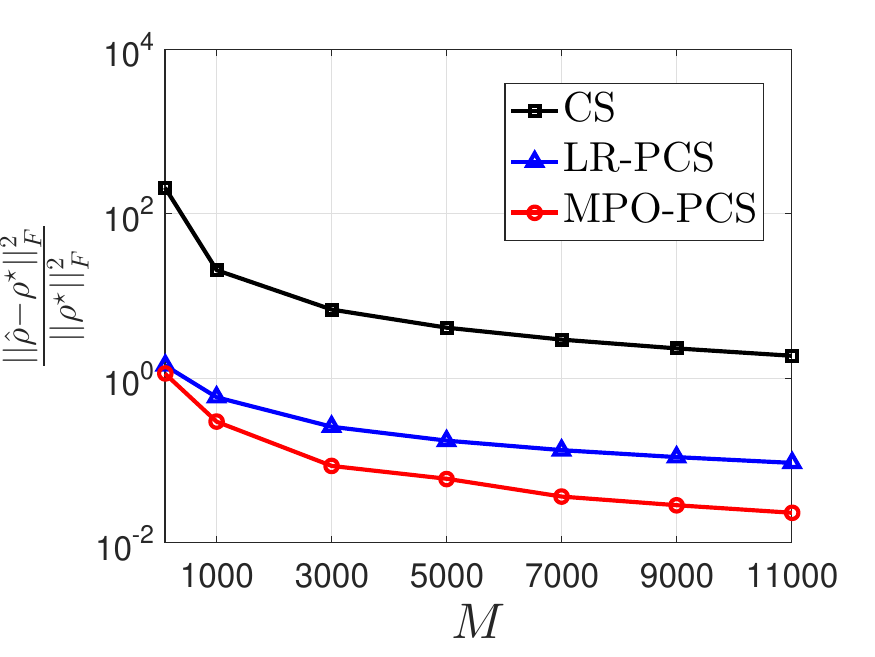}
\end{minipage}
\label{thermal T0.2 MSE}
}
\subfigure[]{
\begin{minipage}[t]{0.31\textwidth}
\centering
\includegraphics[width=5.5cm]{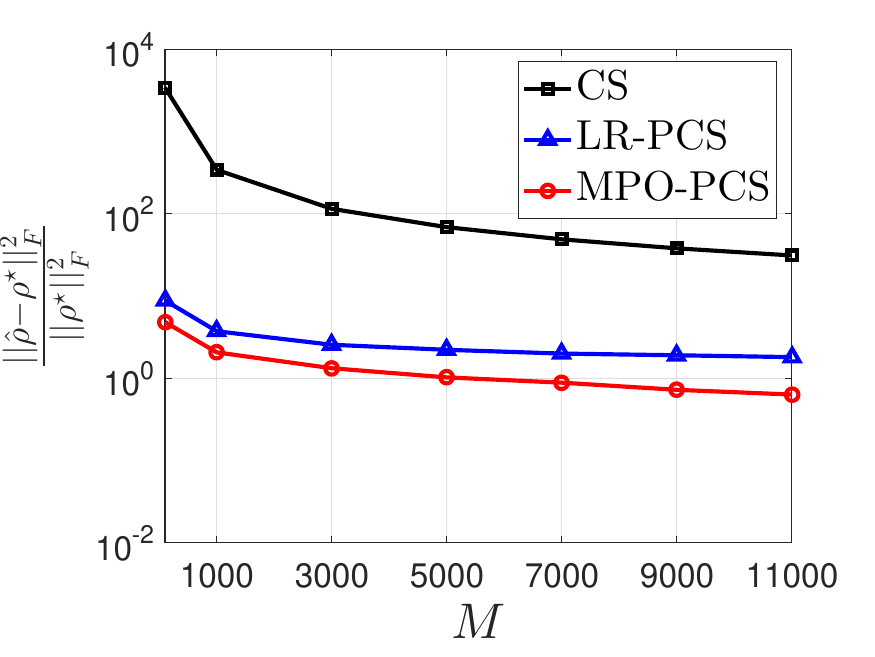}
\end{minipage}
\label{thermal T2 trace}
}
\subfigure[]{
\begin{minipage}[t]{0.31\textwidth}
\centering
\includegraphics[width=5.5cm]{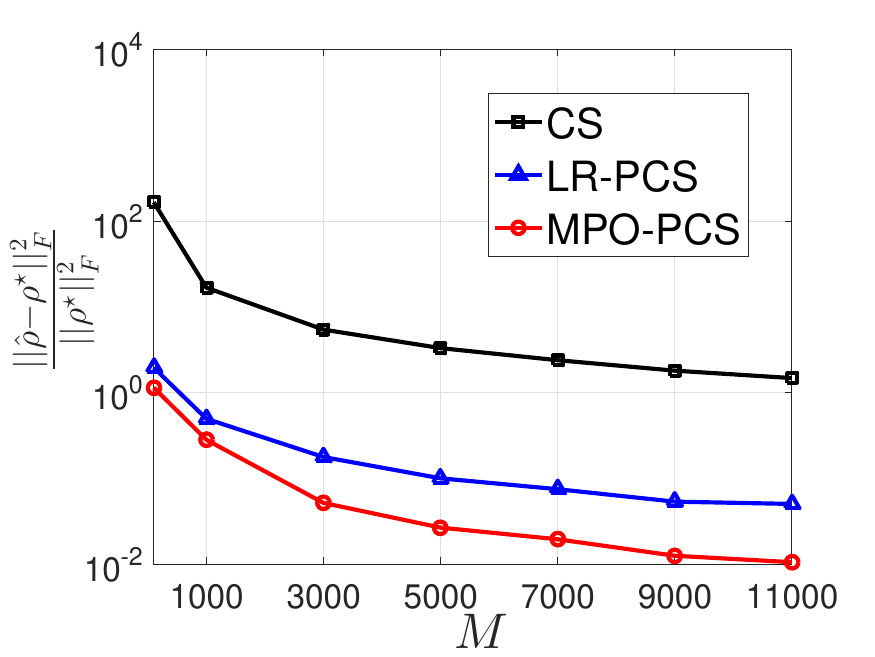}
\end{minipage}
\label{GHZ trace}
}
\caption{Mean square error as a function of the number of state copies $M$ for (a) thermal state ($T = 0.2$), (b) thermal state ($T = 2$), and (c) GHZ state. Comparison between different methods for (a) thermal state ($T = 0.2$), (b) thermal state ($T = 2$), and (c) GHZ state. All figures have $M=100$ as the starting point.}
\label{thermal state and GHZ state MSE}
\end{figure*}

\begin{figure*}[!ht]
\centering
\subfigure[]{
\begin{minipage}[t]{0.31\textwidth}
\centering
\includegraphics[width=5.5cm]{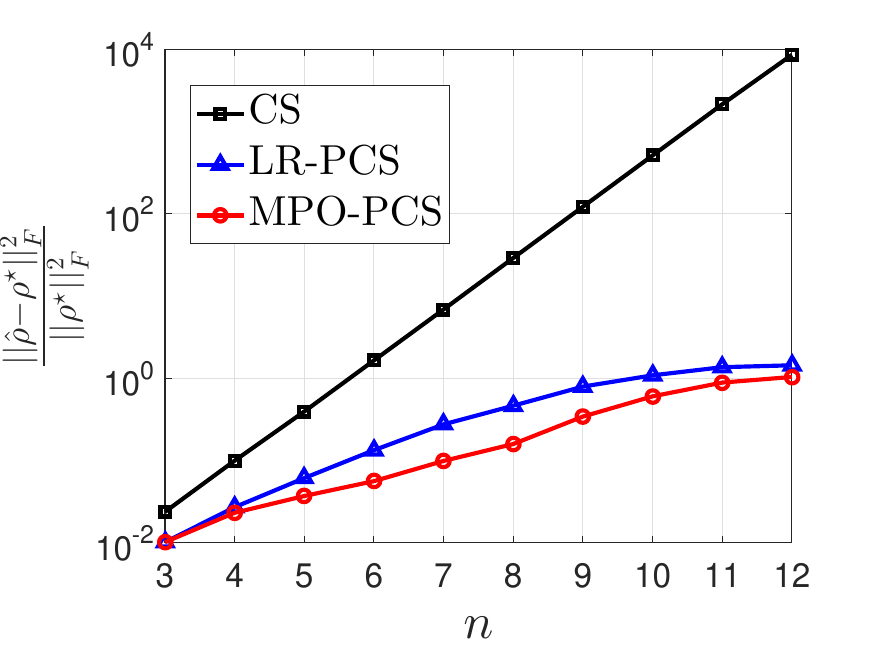}
\end{minipage}
\label{MSE diff n thermal}
}
\subfigure[]{
\begin{minipage}[t]{0.31\textwidth}
\centering
\includegraphics[width=5.5cm]{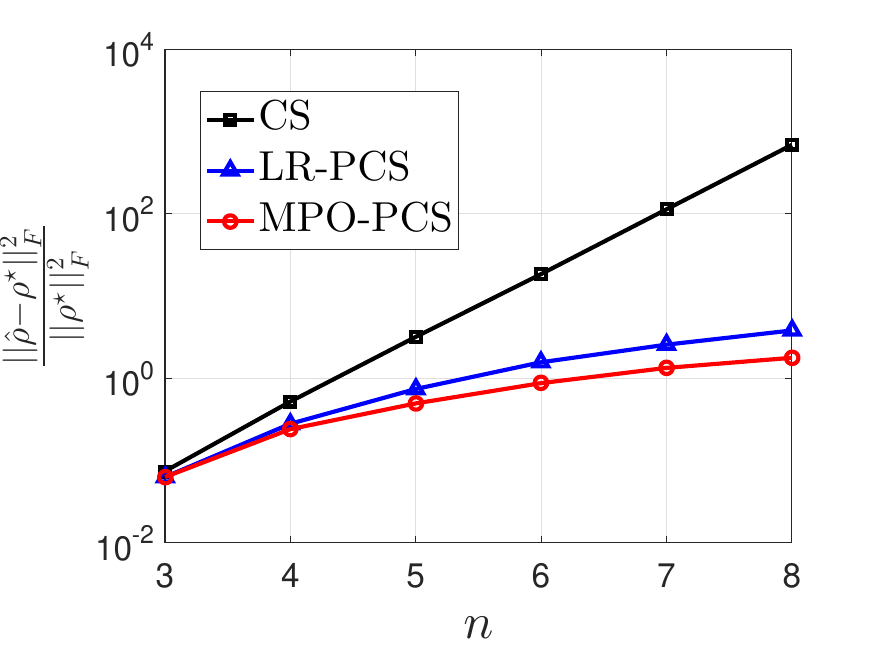}
\end{minipage}
\label{MSE diff n thermal2}
}
\subfigure[]{
\begin{minipage}[t]{0.31\textwidth}
\centering
\includegraphics[width=5.5cm]{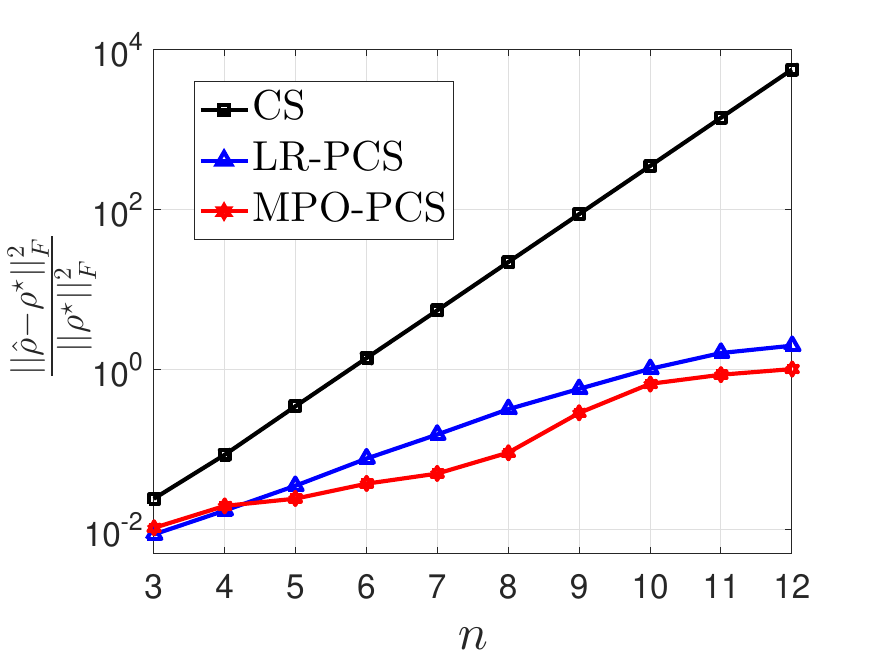}
\end{minipage}
\label{MSE diff n GHZ}
}
\caption{Mean squared error as a function of the total qubit number with $M = 3000$ for (a) thermal state ($T = 0.2$), (b) thermal state ($T = 2$), and (c) GHZ state.}
\label{F norm for different n}
\end{figure*}

In the second set of trials, we test CS and MPO-PCS across varying numbers of measurements $M$ and bond dimension $D$. We consider $n=7$-qubit matrix product states (MPSs, pure state special cases of MPOs) of the form $\vrho^\star = \vu^\star {\vu^\star }^\dagger\in\C^{128\times 128}$, where $\vu^\star \in \C^{128\times 1}$ satisfies $\|\vu^\star\|_2=1$ and its $(i_1\cdots i_7)$-element can be represented in the matrix product form: $\vu^\star(i_1\cdots i_7) = {\mU_1^\star}^{i_1}\cdots {\mU_7^\star}^{i_7}$. Here, each matrix ${\mU_{\ell}^\star}^{i_{\ell}}$ has size $d \times d$, except for ${\mU_1^\star}^{i_1}$ and ${\mU_7^\star}^{i_7}$ of dimensions of $1 \times d$ and $d \times 1$, respectively.

To generate each MPS $\vu^\star$, we draw a length-128 complex vector with i.i.d. standard normal elements, apply TT-SVD \cite{Oseledets11} to truncate it to an MPS, and then normalize the result to unit length. As a result, entry $\vrho^\star(i_1 \cdots i_7, j_1 \cdots j_7)$ can be expressed as $    \vrho^\star(i_1 \cdots i_7, j_1 \cdots j_7) = ({\mU_1^\star}^{i_1}\otimes {{\mU_1^\star}^{j_1}}^\dagger) \cdots ({\mU_7^\star}^{i_7}\otimes {{\mU_7^\star}^{j_7}}^\dagger) = {\mX_1^\star}^{i_1,j_1}\cdots {\mX_7^\star}^{i_7,j_7}$, where $\otimes$ denotes the Kronecker product. Thus, $\vrho^\star = \vu^\star {\vu^\star }^\dagger$ is also an MPO with bond dimension $D = d^2$ (equal for all qubits). As shown in Fig.~\ref{MPS T2 MSE diff summary} MPO-PCS attains significantly lower error than CS, as it leverages knowledge about the underlying MPO structure. And the recovery error of MPO-PCS increases with higher MPO bond dimension (in line with Table~\ref{Methods comparison MPO}), whereas that of CS remains the same regardless of $D$.

In the third set of trials, we simulate measurements on $7$-qubit density matrices: $(i)$ thermal state
\footnote{The thermal state is generated from the 1D quantum Ising model $H = \sum_{j=1}^{n-1}\sigma_j^z\sigma_{j+1}^z + \sum_{j=1}^{n}\sigma_j^x$ with $\sigma_j^a = \mId_{2^{j-1}} \otimes \sigma^a  \otimes \mId_{2^{n-j}}\in\R^{2^n\times 2^n}, a = x, z$ and $\sigma^x = \begin{bmatrix} 0 & 1 \\ 1 & 0 \end{bmatrix}, \sigma^z = \begin{bmatrix} 1 & 0 \\ 0 & -1 \end{bmatrix}$. The thermal state is then defined as $\vrho^\star = \frac{e^{-H/T}}{\trace(e^{-H/T})}$.}
with temperature $T = 0.2$ (a relatively low temperature close to the ground
state); $(ii)$ thermal state with temperature $T = 2$ (corresponding to a relatively high temperature); and $(iii)$ Greenberger--Horne--Zeilinger (GHZ) state
\footnote{The GHZ state is constructed as $\vrho^\star = \vg \vg^\dagger $ where $\vg =\begin{bmatrix}\frac{1}{\sqrt{2}} & 0 & \cdots & 0 & \frac{1}{\sqrt{2}} \end{bmatrix}^\top\in\R^{2^n\times 1}$.}.
It is worth noting that the low-temperature thermal state $(i)$ and the GHZ state $(iii)$  simultaneously exhibit low-rank and MPO structures \cite{cramer2010efficient,francca2021fast,jameson2024optimal}, making them well-suited for demonstrating the advantages of exploiting structured subspaces.
We impose a rank constraint $r\in\{4,24,1\}$ for the estimator on each state, respectively. For the $T=0.2$ thermal state, the ground-truth density matrix has rank of approximately $4$, while for the high-temperature case ($T=2$), it is full-rank; for LR-PCS $r=24$ is selected, somewhat arbitrarily, which is sufficient to encompass $80\%$ of the sum of the eigenvalues of the ground-truth density matrix. 
In addition, we apply TT-SVD on the CS estimator to adaptively select the bond dimensions using the error tolerance $10^{-14}$.
\Cref{thermal state and GHZ state MSE} shows that the proposed LR-PCS and MPO-PCS methods outperform standard CS, as quantified by  the Frobenius norm. Additionally, MPO-PCS demonstrates superior performance compared to LR-PCS, which can be attributed to the lower degrees of freedom in the MPO structure relative to the low-rank structure [cf. Eqs.~(\ref{error bound of projected classical shadow F},\ref{error bound of projected classical shadow MPO F})].

In the final test, we examine how the recovery error scales with qubit number $n$, using parameter settings of $r = 4$ for $T = 0.2$, $r = 4(n-1)$ for $T = 2$, $r = 1$ for GHZ state and an error tolerance of $10^{-14}$ for determining bond dimension $D$.
As highlighted in Fig.~\ref{F norm for different n}, both LR-PCS and MPO-PCS effectively attenuate the growth in recovery error as the system size $n$ increases, in contrast to the standard CS method. This improvement is attributed to the utilization of the low-dimensional structure in these methods. Additionally, the recovery error of MPO-PCS scales polynomially with $n$, as indicated in Eq.~\eqref{error bound of projected classical shadow MPO F}, rather than exponentially as in Eq.~\eqref{error bound of projected classical shadow F} of LR-PCS; hence, MPO-PCS outperforms LR-PCS in terms of recovery error.

\section{Conclusion}
\label{conclcusion}

This paper has introduced the projected classical shadow (PCS) method to address the computational challenges of quantum state tomography (QST) in large Hilbert spaces by leveraging the classical shadow (CS) framework combined with a physical projection step. The method provides guaranteed performance under Haar-random measurements. Theoretical results show that the PCS method achieves high accuracy in reconstructing general and low-rank quantum states while minimizing the number of state copies, meeting information-theoretically optimal bounds. Moreover, the PCS method reduces the number of state copies required for matrix product operator (MPO) states compared to existing results using Haar random measurements. Numerical validation further demonstrates the practicality and computational efficiency of PCS for large-scale quantum state reconstruction.

More broadly, our formalism points to a promising new general direction for CS methods. Although originally introduced for the estimation of state \textit{properties} rather than the state \textit{per se}~\cite{huang2020predicting}, CS nevertheless relies on an estimator $\vrho_{\text{CS}}$ of the full density matrix. As our results reveal, this generally unphysical estimator can be projected onto a physical space of interest---whether the entire Hilbert space or some subset thereof (Fig.~\ref{Illustration of projected classical shadow method})---with performance guarantees that  attain information-theoretic bounds (for the case of arbitrary and low-rank states) or improve upon previous scaling results (for MPO states). Therefore in merging the conceptual simplicity of CS with the scaling improvements possible in structured quantum systems, our results suggest a compelling role for PCS in traditional quantum state estimation, with exciting opportunities for future exploration in even more types of subspaces tailored to specific physical conditions or prior knowledge, such as projected entangled pair operator (PEPO) \cite{cirac2021matrix} and multiscale entanglement renormalization
ansatz (MERA) \cite{haegeman2013entanglement}.

\acknowledgments
We acknowledge funding support from the National Science Foundation (CCF-2241298, EECS-2409701) and the U.S. Department of Energy (ERKJ432,  DE-SC0024257). We thank the Ohio Supercomputer Center for providing the computational resources and the Quantum Collaborative led by Arizona State University for providing valuable expertise and resources. A portion of this work was performed at Oak Ridge National Laboratory, operated by UT-Battelle for the U.S. Department of Energy under Contract No. DE-AC05-00OR22725.


%

\appendix
\onecolumngrid

\section{Proof of Equation~(\ref{proof of expectation of MSE main result})}
\label{proof of expectation of MSE}

\begin{proof}
    We expand $\E\|\vrho_{\text{CS}}  - \vrho^\star \|_F^2$ as follows:
\begin{eqnarray}
\label{proof of expectation of MSE1}
    \E\|\vrho_{\text{CS}}  - \vrho^\star \|_F^2&=& \E\left\| \frac{1}{M}\sum_{m=1}^M\vrho_m - \vrho^\star\right\|_F^2\nonumber\\
     & =& \E\left\<\frac{1}{M}\sum_{m=1}^M (\vrho_m - \vrho^\star), \frac{1}{M}\sum_{m=1}^M (\vrho_m - \vrho^\star)\right\>\nonumber\\
     & =& \frac{1}{M^2}\E\sum_{m=1}^M\|\vrho_m  - \vrho^\star \|_F^2\nonumber\\
     & =&  \frac{1}{M}\E\|\vrho_1  - \vrho^\star \|_F^2\nonumber\\
     & =& \frac{1}{M}(\|\vrho^\star\|_F^2 - 2\E\<\vrho_1, \vrho^\star   \> + \E\<\vrho_1,\vrho_1  \>)\nonumber\\
     & =& \frac{1}{M}\bigg[- \|\vrho^\star \|_F^2 + (2^n+1)^2\E\<\vphi_{1,j_1}\vphi_{1,j_1}^\dagger, \vphi_{1,j_1}\vphi_{1,j_1}^\dagger  \>-2(2^n+1)\E\<\vphi_{1,j_1}\vphi_{1,j_1}^\dagger, \mId_{2^n}  \> +2^n \bigg]\nonumber\\
     & =&\frac{4^n + 2^n - 1 - \|\vrho^\star \|_F^2}{M}
\end{eqnarray}
where the third line follows from $\E[\vrho_m] = \vrho^\star$, the fourth from the equivalence under expectation of all measurements $m$, and the last from the normalization $\<\vphi_{1,j_1}\vphi_{1,j_1}^\dagger, \vphi_{1,j_1}\vphi_{1,j_1}^\dagger  \> = \<\vphi_{1,j_1}\vphi_{1,j_1}^\dagger, \mId_{2^n}  \> = 1$.

\end{proof}

\section{Proof of {Theorem} \ref{project CS:general}}
\label{proof of error bound for general state}

\begin{proof}
We define a restricted Frobenius norm as
\begin{eqnarray}
    \label{Definition of the restricted F norm1 CS general state}
    \|\vrho_{\text{PCS}} - \vrho^\star\|_{F,\wh\setX} &=& \|\vrho_{\text{PCS}} - \vrho^\star\|_{F}= \max_{\vrho\in \wh\setX} \<\vrho_{\text{PCS}} - \vrho^\star,  \vrho  \>.
\end{eqnarray}
Note that $\wh\setX$, includes conditions such as $\trace(\vrho) = 0$, $\vrho = \vrho^\dagger$ and $\|\vrho\|_F\leq 1$. By the definition of the restricted  Frobenius norm in Eq.~\eqref{Definition of the restricted F norm1 CS general state}, we can further analyze
\begin{equation}
    \label{Expansion of  projected CS general state}
    \|\vrho_{\text{PCS}} - \vrho^\star \|_F
     = \|\vrho_{\text{PCS}} - \vrho^\star \|_{F,\wh\setX}
     \leq  \|\vrho_{\text{CS}}  - \vrho^\star \|_{F,\wh\setX}
    =  \max_{\vrho\in \wh\setX } \bigg\<\frac{1}{M}\sum_{m=1}^M \left[(2^n+1) \vphi_{m,j_m}\vphi_{m,j_m}^\dagger - \mId_{2^n}\right] - \vrho^\star, \vrho  \bigg\>,
\end{equation}
where the inequality follows from the assumption that the physical projection $\calP_{\setX}(\cdot)$ is optimal and therefore satisfies nonexpansiveness.
Next, we bound $\frac{1}{M}\sum_{m=1}^M [(2^n+1) \vphi_{m,j_m}\vphi_{m,j_m}^\dagger - \mId_{2^n}] - \vrho^\star$ using the covering argument. According to the assumption, we initially construct an $\epsilon$-net $\{ \vrho^{(1)}, \dots, \vrho^{(N_{\epsilon}(\wt \setX))} \}\in \wt \setX \subset \wh \setX$, where the size of $\wt \setX$ is denoted by $N_{\epsilon}(\wt \setX)$ such that
\begin{eqnarray}
    \label{set of covering number general state}
    \sup_{\vrho: \|\vrho\|_F \leq 1}\min_{p\leq N_{\epsilon}(\wt \setX)} \|\vrho - \vrho^{(p)}\|_F\leq \epsilon.
\end{eqnarray}
In addition, we denote $\mB_m = \frac{1}{M}((2^n+1) \vphi_{m,j_m}\vphi_{m,j_m}^\dagger - \mId_{2^n}- \vrho^\star)$ and derive
\begin{equation}
    \label{relationship between fix and any}
    \max_{\vrho\in \wh \setX}  \bigg\< \sum_{m=1}^M \mB_m, \vrho \bigg\>
    =\max_{\vrho\in \wh \setX}  \bigg\< \sum_{m=1}^M \mB_m, \vrho - \vrho^{(p)} + \vrho^{(p)} \bigg\>\nonumber
    \leq  \max_{\vrho^{(p)}\in  \wt\setX}\bigg\< \sum_{m=1}^M \mB_m, \vrho^{(p)} \bigg\>  + \epsilon \max_{\vrho\in \wh \setX}  \bigg\< \sum_{m=1}^M \mB_m, \vrho \bigg\>.
\end{equation}

By setting $\epsilon = 0.5$ and moving the second term on the right-hand side to the left, we get
\begin{eqnarray}
    \label{relationship between fix and any 1}
    \max_{\vrho\in \wh \setX}  \bigg\< \sum_{m=1}^M \mB_m, \vrho \bigg\> \leq \max_{\vrho^{(p)}\in \wt \setX}  2\bigg\< \sum_{m=1}^M \mB_m, \vrho^{(p)} \bigg\>.
\end{eqnarray}
Then we need to build the concentration inequality for the right hand side of Eq.~\eqref{relationship between fix and any 1}. First, we define
\begin{eqnarray}
    \label{definition of sum in concentration inequality}
    \sum_{m = 1}^{M} s_m= \sum_{m=1}^M\<(2^n+1) \vphi_{m,j_m}\vphi_{m,j_m}^\dagger - \mId_{2^n} - \vrho^\star, \vrho^{(p)}  \>,\nonumber\\
\end{eqnarray}
and due to $\E[ (2^n+1) \vphi_{m,j_m}\vphi_{m,j_m}^\dagger - \mId_{2^n} - \vrho^\star] = {\bf 0}$, we have $\E[s_m] = 0$. Moreover, we rewrite $s_m$ as
\begin{eqnarray}
    \label{change of variable sk}
    s_m&=& \<(2^n+1) \vphi_{m,j_m}\vphi_{m,j_m}^\dagger - \mId_{2^n} - \vrho^\star, \vrho^{(p)}  \>\nonumber\\
    &=& (2^n+1) \bigg\< \vphi_{m,j_m}\vphi_{m,j_m}^\dagger - \frac{\vrho^\star}{2^n+1}, \vrho^{(p)}    \bigg\>\nonumber\\
    &=& (2^n+1) \bigg\< \vphi_{m,j_m}\vphi_{m,j_m}^\dagger,  \vrho^{(p)} - \frac{\<\vrho^\star, \vrho^{(p)}  \>}{2^n+1} \mId_{2^n}  \bigg \>\nonumber\\
    &=& (2^n+1) \< \vphi_{m,j_m}\vphi_{m,j_m}^\dagger,  \mD   \>,
\end{eqnarray}
where the second line follows from  $\trace(\vrho^{(p)}) = \<\mId_{2^n}, \vrho^{(p)} \> = 0$. We can further compute
\begin{eqnarray}
    \label{p moment for change of variable sk}
   \E[|s_m|^a]  &=& \E[(2^n+1)^a |\< \vphi_{m,j_m}\vphi_{m,j_m}^\dagger,  \mD   \> |^a ]\nonumber\\
   &\leq& (2^n+1)^a\E[ (\trace(\vphi_{m,j_m}\vphi_{m,j_m}^\dagger |\mD | ))^a  ]\nonumber\\
   &=& \frac{(2^n+1)^a}{C_{2^n+a-1}^a} \trace(|\mD |^{\otimes a} P_{\text{Sym}})\nonumber\\
   &\leq& \frac{(2^n+1)^a}{C_{2^n+a-1}^a} \||\mD | \|_F^{\otimes a} \|P_{\text{Sym}}\| \nonumber\\
   &\leq& 6\times 2^{a-2} a!,
\end{eqnarray}
where $|\mD| = \sqrt{\mD^2} = \mU\sqrt{\mSigma}\mV^\dagger$ denotes the absolute value of the matrix $\mD$ with its compact SVD $\mD^2 = \mU\mSigma\mV^\dagger$ and $\mA^{\otimes a} = \underbrace{\mA\otimes \cdots \otimes \mA}_a$ holds for any matrix $\mA$.
Given that the unitary Haar measure conforms to any unitary $p$-design, as exemplified in Ref.~\cite[Example 51]{mele2023introduction}, we can deduce the third line, with $P_{\text{Sym}}$ representing an orthogonal projector onto the symmetric subspace. The second inequality follows from \cite[Lemma 7]{zhu2021global} and $\||\mD | \|_F^{\otimes a} = \||\mD |^{\otimes a} \|_F$ due to the positive semidefiniteness of $|\mD |^{\otimes a}$ and the orthogonal projection. In the last line, we utilize $\||\mD | \|_F \leq \|\vrho^{(p)} \|_F + \|\frac{\<\vrho^\star, \vrho^{(p)}  \>}{2^n+1} \mId_{2^n}\|_F\leq 1  + \frac{2^n }{2^n+1}\|\vrho^{(p)}\|_F\|\vrho^\star\|_F\leq 2$, $\|P_{\text{Sym}}\|\leq 1$ and $\frac{(2^n+1)^a}{C_{2^n+a-1}^a} \leq \frac{3}{2}a!$.

Based on \Cref{Classical Bernstein inequality} with $\E[s_m] = 0$ and $\E[|s_m|^a]\leq 6\times 2^{a-2} a!$, for any $t\in[0,1]$, we have the probability

\begin{equation}
    \label{concentration inequality for fixed absolute term}
    \P{ \frac{1}{M}\bigg|\sum_{m=1}^M\bigg\< (2^n+1) \vphi_{m,j_m}\vphi_{m,j_m}^\dagger - \mId_{2^n} - \vrho^\star, \vrho^{(p)}  \bigg\>\bigg|\geq t } \leq 2 e^{-\frac{Mt^2}{28}}.
\end{equation}

Combining Eqs.~(\ref{relationship between fix and any 1},\ref{concentration inequality for fixed absolute term}),  there
exists an $\epsilon$-net $\wt \setX$ of $\wh \setX$  such that
{\small\begin{eqnarray}
    \label{concentration inequality for random term general state}
    \P{\max_{\vrho\in \wh \setX} \bigg\< \frac{1}{M}\sum_{m=1}^M \left[(2^n+1) \vphi_{m,j_m}\vphi_{m,j_m}^\dagger \!\!-\!\! \mId_{2^n}\right] \!\!-\!\! \vrho^\star, \vrho  \bigg\> \geq t}
    &\leq&\P{\!\max_{\vrho^{(p)}\in \wt \setX} \frac{1}{M}\sum_{m=1}^M\big\< (2^n\!\!+\!\!1) \vphi_{m,j_m}\vphi_{m,j_m}^\dagger \!\!-\!\! \mId_{2^n} \!\!-\!\! \vrho^\star, \vrho^{(p)}  \big\>\geq \frac{t}{2} }\nonumber\\
    &\leq&\P{\!\max_{\vrho^{(p)}\in \wt \setX} \frac{1}{M}\bigg|\sum_{m=1}^M\big\< (2^n\!\!+\!\!1) \vphi_{m,j_m}\vphi_{m,j_m}^\dagger \!\!-\!\! \mId_{2^n} \!\!-\!\! \vrho^\star, \vrho^{(p)}  \big\>\bigg|\geq \frac{t}{2} }\nonumber\\
    &\leq&  2N_{\epsilon}(\wt \setX) e^{-\frac{Mt^2}{112}}\nonumber\\
    &\leq& e^{-\frac{Mt^2}{112} + \log 2N_{\epsilon}(\wt \setX)}.
\end{eqnarray}}
We opt for $t = O\left(\sqrt{\frac{\log N_{\epsilon}(\wt \setX)}{M}}\right)$, and subsequently, with probability $1- e^{- \Omega(\log N_{\epsilon}(\wt \setX))}$, we derive
\begin{eqnarray}
    \label{upper bound F norm of projected general state}
    \|\vrho_{\text{PCS}} - \vrho^\star\|_F \leq O\left(\sqrt{\frac{\log N_{\epsilon}(\wt \setX)}{M}}\right).
\end{eqnarray}

\end{proof}

\section{Proof of {Theorem} \ref{project CS conclusion}}
\label{Proof of error bound in projected CS}

\begin{proof}

We define a restricted Frobenius norm as following:
\begin{eqnarray}
    \label{Definition of the restricted F norm1 CS}
    &&\|\vrho_1 - \vrho_2\|_{F,2r} = \|\vrho_1 - \vrho_2\|_F = \max_{\vrho\in\wh \setX_{2r}} \<\vrho_1 - \vrho_2,  \vrho  \>,
\end{eqnarray}
where the set $\wh\setX_{r}$ is defined as follows:
\begin{eqnarray}
    \label{The set of normalized rho CS}
    \wh \setX_{r} = \{\vrho\in\C^{2^n\times 2^n}:  \vrho = \vrho^\dagger, \text{rank}(\vrho)= r,  \trace(\vrho) = 0, \|\vrho\|_F \leq 1 \}.
\end{eqnarray}

By the definition of the restricted  Frobenius norm in Eq.~\eqref{Definition of the restricted F norm1 CS}, we can further analyze
\begin{eqnarray}
    \label{Expansion of  projected CS}
    \|\vrho_{\text{LR-PCS}} - \vrho^\star \|_F\nonumber
    &=& \|\vrho_{\text{LR-PCS}} - \vrho^\star \|_{F,2r}\nonumber\\
    &\leq& \|\calP_{\text{ED}}(\vrho_{\text{CS}})  - \vrho^\star \|_{F,2r}\nonumber\\
    &\leq& 2\|\vrho_{\text{CS}} - \vrho^\star \|_{F,2r}\nonumber\\
    &=& 2\max_{\vrho\in \wh \setX_{2r}} \bigg\<\frac{1}{M}\sum_{m=1}^M [(2^n+1) \vphi_{m,j_m}\vphi_{m,j_m}^\dagger - \mId_{2^n}] \!\!-\!\! \vrho^\star, \vrho  \bigg\>,\nonumber\\
\end{eqnarray}
where the first two inequalities respectively follow the nonexpansiveness property of the projection and the quasi-optimality property of eigenvalue decomposition (ED) projection \cite{Oseledets11}. Next, we need bound the first term in the last line of Eq.~\eqref{Expansion of  projected CS} using the covering argument. According to Ref.~\cite[Lemma 3.1]{Candes11}, we initially construct an $\epsilon$-net $\{ \vrho^{(1)}, \dots, \vrho^{N_{\epsilon}(\wt \setX_{2r})} \}\in \wt \setX_{2r} \subset \wh \setX_{2r}$  in which the size of $\wt \setX_{2r}$ is denoted by $N_{\epsilon}(\wt \setX_{2r})\leq (\frac{9}{\epsilon})^{(2^{n+2} + 2 )r}$ such that
\begin{eqnarray}
    \label{set of covering number}
    \sup_{\vrho: \|\vrho\|_F\leq 1}\min_{p\leq N_{\epsilon}(\wt \setX_{2r})} \|\vrho - \vrho^{(p)}\|_F\leq \epsilon.
\end{eqnarray}

Combining Eqs.~(\ref{relationship between fix and any 1},\ref{concentration inequality for fixed absolute term}) in {Appendix}~\ref{proof of error bound for general state},  there
exists an $\epsilon$-net $\wt \setX_{2r}$ of $\wh \setX_{2r}$  such that
{\small \begin{eqnarray}
    \label{concentration inequality for random term low-rank}
    \P{\max_{\vrho\in \wh \setX_{2r}}\<\frac{1}{M}\sum_{m=1}^M \left[(2^n\!\!+\!\!1) \vphi_{m,j_m}\vphi_{m,j_m}^\dagger \!\!-\!\! \mId_{2^n}\right] \!\!-\!\! \vrho^\star, \vrho  \> \geq t}
    &\leq&\P{\!\max_{\vrho^{(p)}\in \wt \setX_{2r}} \frac{1}{M}\sum_{m=1}^M\< (2^n\!\!+\!\!1) \vphi_{m,j_m}\vphi_{m,j_m}^\dagger \!\!-\!\! \mId_{2^n} \!\!-\!\! \vrho^\star, \vrho^{(p)}  \>\geq \frac{t}{2} }\nonumber\\
    &\leq&\P{\!\max_{\vrho^{(p)}\in \wt \setX_{2r}} \frac{1}{M}\bigg|\sum_{m=1}^M\< (2^n\!\!+\!\!1) \vphi_{m,j_m}\vphi_{m,j_m}^\dagger \!\!-\!\! \mId_{2^n} \!\!-\!\! \vrho^\star, \vrho^{(p)}  \>\bigg|\geq \frac{t}{2} }\nonumber\\
    &\leq&  2 \bigg(\frac{9}{\epsilon}\bigg)^{(2^{n+2} + 2 )r} e^{-\frac{Mt^2}{112}}\nonumber\\
    &\leq& e^{-\frac{Mt^2}{112} + C 2^n r},
\end{eqnarray}}
where we set $\epsilon=\frac{1}{2}$ and $C$ is a positive constant. We opt for $t = O\left(\sqrt{\frac{2^n r}{M}}\right)$ and subsequently, with probability $1- e^{- \Omega(2^n r)}$,  derive
\begin{eqnarray}
    \label{upper bound F norm of projected }
    \|\vrho_{\text{LR-PCS}} - \vrho^\star\|_F \leq O\bigg(\sqrt{\frac{2^n r}{M}}\bigg).
\end{eqnarray}

\end{proof}

\section{Proof of {Theorem} \ref{project CS MPO conclusion}}
\label{Proof of project CS MPO conclusion}

We define a restricted Frobenius norm as following:
\begin{eqnarray}
    \label{Definition of the restricted F norm1}
    &&\|\vrho_1 - \vrho_2\|_{F,2D} = \|\vrho_1 - \vrho_2\|_F = \max_{\vrho\in\wh \setX_{2D}} \<\vrho_1 - \vrho_2,  \vrho  \>.
\end{eqnarray}
where we denote by $\wh \setX_{D}$ the normalized set of MPOs with bond dimension $D$:
\begin{eqnarray}
\label{SetOfMPO normalized}
\wh \setX_{D}= \Big\{ \vrho\in\C^{2^n\times 2^n}:\ \vrho = \vrho^\dagger, \|\vrho\|_F \leq 1, \trace(\vrho) = 0,  \text{bond dimension}(\vrho) = D \Big\}.
\end{eqnarray}
Note that the presence of additional orthonormal structures arises from the fact that, according to Ref.~\cite{holtz2012manifolds}, any TT form is equivalent to a left-orthogonal TT form \cite{Oseledets11}.

We define  $\calP_{\trace}(\cdot)$ as a projection onto convex set $\{\vrho\in\C^{2^n\times 2^n}: \trace(\vrho) = 1 \}$.
By the definition of the restricted  Frobenius norm  \eqref{Definition of the restricted F norm1}, we can derive
\begin{eqnarray}
\label{proof of upper bound spec}
    \|\vrho_{\text{MPO-PCS}} - \vrho^\star\|_F    &\leq & \|\calP_{\trace}(\text{SVD}_{D}^{tt}(\vrho_{\text{CS}})) - \vrho^\star \|_{F}\nonumber\\
    &=& \|\calP_{\trace}(\text{SVD}_{D}^{tt}(\vrho_{\text{CS}})) - \vrho^\star  \|_{F, 2D}\nonumber\\
    &\!\!\!\!\leq\!\!\!\!&  \| \text{SVD}_{D}^{tt}(\vrho_{\text{CS}}) - \vrho^\star\|_{F, 2D}\nonumber\\
    &\!\!\!\!\leq\!\!\!\!& (1+ \sqrt{n-1}) \| \vrho_{\text{CS}} - \vrho^\star  \|_{F, 2D}\nonumber\\
    &=& (1+ \sqrt{n-1})\max_{\vrho\in \wh \setX_{2D}} \bigg\<\frac{1}{M}\sum_{m=1}^M ((2^n+1) \vphi_{m,j_m}\vphi_{m,j_m}^\dagger- \mId_{2^n}) - \vrho^\star, \vrho  \bigg\>\nonumber\\
    &=&(1+ \sqrt{n-1})\max_{\vrho\in \wh \setX_{2D}} \bigg\<\frac{1}{M}\sum_{m=1}^M ((2^n+1) \vphi_{m,j_m}\vphi_{m,j_m}^\dagger- \mId_{2^n}) - \vrho^\star, \vrho  \bigg\>
\end{eqnarray}
where the first two inequalities respectively follow from the nonexpansiveness property of the projection onto the convex set, while the third inequality is a consequence of the quasi-optimality property of TT-SVD projection \cite{Oseledets11}. Additionally, we denote
\begin{eqnarray}
\label{SetOfMPO normalized another}
\wh \setX_{D}= \Big\{&& \vrho\in\C^{2^n\times 2^n}:\ \vrho = \vrho^\dagger, \trace(\vrho) = 0, \vrho(i_1 \cdots i_\nqbit, j_1\cdots j_\nqbit) = \mX_1^{i_1,j_1} \mX_2^{i_2,j_2} \cdots \mX_\nqbit^{i_\nqbit,j_\nqbit}, \nonumber\\
&& \mX_1^{i_1,j_1}\in\C^{1\times D},\mX_n^{i_n,j_n}\in\C^{D\times 1},\mX_\ell^{i_\ell,j_\ell}\in\C^{D\times D},  \|L(\mX_\ell) \|\leq 1,\ell\in[n-1], \|L(\mX_n)\|_F\leq 1 \Big\}.
\end{eqnarray}
Based on $\|\vrho\|_F=  \|L(\mX_n)\|_F \leq 1$ for a left-orthogonal TT form using \cite[Eq.(44)]{qin2024guaranteed}, we obtain the last line.

Next, we will apply the covering argument to bound \eqref{proof of upper bound spec}. For any fixed value of $\widetilde\vrho\in \wt \setX_{2D} \subset \wh \setX_{2D}$, using Eq.~\eqref{relationship between fix and any 1}, concentration inequality in Eq.~\eqref{concentration inequality for fixed absolute term} and \Cref{lem:cover-number-left-orthogonal MPO}, there exists an $\epsilon$-net $\wt \setX_{2D}$ of $\wh \setX_{2D}$  such that
{\small \begin{eqnarray}
    \label{concentration inequality for random term}
    \P{\max_{ \vrho\in \wh \setX_{2D}}\<\frac{1}{M}\sum_{m=1}^M ((2^n+1) \vphi_{m,j_m}\vphi_{m,j_m}^\dagger \!\!-\!\! \mId_{2^n}) \! - \!\!\vrho^\star, \vrho  \> \geq t}
    &\leq&\P{\max_{ \wt\vrho\in \wt \setX_{2D}} \frac{1}{M}\sum_{m=1}^M\< (2^n+1) \vphi_{m,j_m}\vphi_{m,j_m}^\dagger \!\!- \!\!\mId_{2^n} \!\!-\!\! \vrho^\star, \wt\vrho  \>\geq \frac{t}{2} }\nonumber\\
    &\leq&\P{\max_{ \wt\vrho\in \wt \setX_{2D}} \frac{1}{M}\bigg|\sum_{m=1}^M\< (2^n+1) \vphi_{m,j_m}\vphi_{m,j_m}^\dagger \!\!-\!\! \mId_{2^n} \!\!- \!\! \vrho^\star, \wt\vrho  \>\bigg|\geq \frac{t}{2} }\nonumber\\
    &\leq&  2\bigg(\frac{4n+\epsilon}{\epsilon}\bigg)^{4n D^2} e^{-\frac{Mt^2}{112}}\nonumber\\
    &\leq& e^{-\frac{Mt^2}{112} + C nD^2 \log n},
\end{eqnarray}}
where we set $\epsilon=\frac{1}{2}$ and $C$ is a positive constant. We opt for $t = O\left(\sqrt{\frac{nD^2 \log n}{M}}\right)$ and subsequently, with probability $1- e^{- \Omega(nD^2 \log n)}$, derive
\begin{eqnarray}
    \label{proof of upper bound spec final}
    \|\vrho_{\text{MPO-PCS}} - \vrho^\star\|_F \leq O\bigg(\sqrt{\frac{n^2D^2 \log n}{M}}\bigg).
\end{eqnarray}

\section{Auxiliary Materials}
\label{sec: Auxiliary Materials}
\begin{lemma}
\label{Classical Bernstein inequality}(Classical Bernstein's inequality \cite[Theorem 6]{guctua2020fast})
Let $s_1,\dots, s_n\in\R$ denote i.i.d. copies of a mean-zero random variable $s$ that obeys $\E[|s|^p]\leq p!R^{p-2} \sigma^2/2$ for all integers $p\geq 2$, where $R,\sigma^2 > 0$  are constants. Then, for all $t > 0$,
\begin{eqnarray}
\label{classical Bernstein}
\P{\bigg|\sum_{i=1}^n s_i  \bigg|\geq t } \leq 2 e^{-\frac{t^2/2}{n\sigma^2 + Rt}}.
\end{eqnarray}
\end{lemma}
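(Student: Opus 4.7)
The plan is to apply the classical Cram\'er--Chernoff scheme: control the moment generating function (MGF) of a single summand via the hypothesized moment growth, multiply across the $n$ independent copies, and optimize the resulting exponent.

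First I would bound $\E[e^{\lambda s}]$ for $\lambda \in (0, 1/R)$. Taylor-expanding in $\lambda$ and invoking $\E[s]=0$ together with $\E[|s|^p]\leq p!R^{p-2}\sigma^2/2$, I would write
\begin{equation*}
\E[e^{\lambda s}] = 1 + \sum_{p\geq 2}\frac{\lambda^p \E[s^p]}{p!} \leq 1 + \sum_{p\geq 2}\frac{\lambda^p R^{p-2}\sigma^2}{2} = 1 + \frac{\lambda^2\sigma^2/2}{1-\lambda R},
\end{equation*}
where the geometric series converges precisely because $\lambda R<1$. Applying the elementary inequality $1+x\leq e^x$ then yields $\E[e^{\lambda s}]\leq \exp\bigl(\lambda^2\sigma^2/(2(1-\lambda R))\bigr)$.

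Next, the exponential Markov inequality combined with independence of the $s_i$'s gives, for any $\lambda\in(0,1/R)$,
\begin{equation*}
\P{\sum_{i=1}^n s_i\geq t} \leq e^{-\lambda t}\bigl(\E[e^{\lambda s}]\bigr)^n \leq \exp\!\left(-\lambda t + \frac{n\lambda^2\sigma^2}{2(1-\lambda R)}\right).
\end{equation*}
I would then optimize by selecting $\lambda^\star = t/(n\sigma^2+Rt)$, verifying that $\lambda^\star R<1$ for every $t>0$, and simplifying: the two terms in the exponent collapse to $-t^2/(2(n\sigma^2+Rt))$, which is precisely the claimed one-sided bound. Applying the same argument to the variables $-s_i$ (whose moment hypothesis is identical since the assumption is phrased via $|s|$) and taking a union bound over the events $\{\sum_i s_i\geq t\}$ and $\{\sum_i s_i\leq -t\}$ then produces the factor of $2$ and the two-sided form in the statement.

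The argument is essentially textbook and I do not expect a genuine obstacle; the only point requiring real attention is verifying that the optimizer $\lambda^\star$ falls into the admissible range $\lambda R<1$ and cleanly executing the algebraic simplification after substitution. The moment growth condition $\E[|s|^p]\leq p!R^{p-2}\sigma^2/2$ is precisely calibrated so that the tail of the MGF Taylor series is bounded by a convergent geometric series, so no further tricks are needed beyond tracking constants.
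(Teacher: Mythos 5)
Your proof is correct. Note, however, that the paper itself never proves this lemma: it is imported verbatim as a known result (Theorem 6 of Gu\c{t}\u{a}--Kahn--Kueng--Tropp \cite{guctua2020fast}), so there is no internal proof to compare against. What you have written is the canonical Cram\'er--Chernoff derivation of Bernstein's inequality under the moment-growth hypothesis: the MGF bound $\E[e^{\lambda s}]\leq \exp\bigl(\lambda^2\sigma^2/(2(1-\lambda R))\bigr)$ for $\lambda R<1$, tensorization by independence, the optimizer $\lambda^\star = t/(n\sigma^2+Rt)$ (which indeed satisfies $\lambda^\star R<1$), and a union bound over both tails using that the hypothesis on $|s|$ applies equally to $-s$; the algebra collapsing the exponent to $-t^2/\bigl(2(n\sigma^2+Rt)\bigr)$ checks out. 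The only step worth making explicit for completeness is the interchange of expectation and the Taylor series of $e^{\lambda s}$, which is justified by Fubini/dominated convergence since $\sum_{p\geq 2}\lambda^p\E[|s|^p]/p! \leq \frac{\sigma^2}{2}\sum_{p\geq 2}\lambda^p R^{p-2}<\infty$ for $\lambda<1/R$, together with the pointwise bound $\E[s^p]\leq\E[|s|^p]$ used term by term.
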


\begin{lemma}(\cite[Lemma 10]{qin2024quantum})
\label{EXPANSION_A1TOAN-B1TOBN_1}
For any  ${\bm A}_i,{\bm A}^\star_i\in\R^{r_{i-1}\times r_i},i\in\{1,\dots,N\}$, we have
\begin{eqnarray}
    \label{EXPANSION_A1TOAN-B1TOBN_2}
    &&{\bm A}_1{\bm A}_2\cdots {\bm A}_N-{\bm A}_1^\star{\bm A}_2^\star\cdots {\bm A}_N^\star= \sum_{i=1}^N \mA_1^\star \cdots \mA_{i-1}^\star (\mA_{i} - \mA_i^\star) \mA_{i+1} \cdots \mA_N.
\end{eqnarray}
\end{lemma}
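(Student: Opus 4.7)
The statement is a purely algebraic identity for the difference of two matrix products; there is no probability, no norm bound, and no geometry involved. The plan is to prove it by introducing a one-parameter family of hybrid products that interpolates between $\mA_1\cdots\mA_N$ and $\mA_1^\star\cdots\mA_N^\star$ by swapping one factor at a time, and then recognize the right-hand side of the claim as the associated telescoping sum.

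Concretely, I would define, for $i\in\{0,1,\ldots,N\}$,
\begin{eqnarray*}
T_i \;=\; \mA_1^\star \mA_2^\star \cdots \mA_i^\star \, \mA_{i+1} \mA_{i+2}\cdots \mA_N,
\end{eqnarray*}
with the convention that an empty product equals the identity matrix of the appropriate size (so the boundary cases $i=0$ and $i=N$ make sense). Dimension compatibility is automatic since each $\mA_i$ and $\mA_i^\star$ lives in $\R^{r_{i-1}\times r_i}$, so that every intermediate product is well-defined. By construction, $T_0 = \mA_1 \mA_2 \cdots \mA_N$ and $T_N = \mA_1^\star \mA_2^\star \cdots \mA_N^\star$, so the left-hand side of the claimed identity equals $T_0 - T_N$, which in turn expands as the telescoping sum
\begin{eqnarray*}
T_0 - T_N \;=\; \sum_{i=1}^N (T_{i-1} - T_i).
\end{eqnarray*}

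Next, I would compute each summand. Since $T_{i-1}$ and $T_i$ agree in every factor except the $i$-th one (which is $\mA_i$ in $T_{i-1}$ and $\mA_i^\star$ in $T_i$), factoring out the common left prefix $\mA_1^\star\cdots\mA_{i-1}^\star$ and right suffix $\mA_{i+1}\cdots\mA_N$ gives
\begin{eqnarray*}
T_{i-1} - T_i \;=\; \mA_1^\star \cdots \mA_{i-1}^\star \, (\mA_i - \mA_i^\star) \, \mA_{i+1} \cdots \mA_N,
\end{eqnarray*}
which is exactly the $i$-th summand on the right-hand side of the claim. Summing over $i\in\{1,\ldots,N\}$ completes the argument.

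No substantive obstacle is anticipated; the only minor bookkeeping concerns the empty-product convention at the endpoints $i=1$ and $i=N$ and verifying that the dimensions of the prefix, the middle difference, and the suffix match at each step. The identity is ring-agnostic: it does not use any norm, positivity, or normalization, so it applies verbatim to complex matrices as well, which is how the lemma is invoked in the rest of the paper.
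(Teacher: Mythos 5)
Your telescoping argument is correct: the hybrid products $T_i = \mA_1^\star \cdots \mA_i^\star \mA_{i+1} \cdots \mA_N$ are well-defined by the dimension compatibility $\mA_i,\mA_i^\star \in \R^{r_{i-1}\times r_i}$, the sum $\sum_{i=1}^N (T_{i-1}-T_i)$ telescopes to $T_0 - T_N$, and each summand factors exactly as claimed. Note that the paper itself does not prove this lemma---it imports it by citation from Lemma 10 of \cite{qin2024quantum}---so there is no in-paper proof to compare against; your argument is the standard (and essentially the only natural) proof of this identity, and it correctly carries over to complex matrices as used elsewhere in the paper.
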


\begin{lemma}
There exists an $\epsilon$-net $\widetilde\setX_{D}$ for $\wh\setX_{D}$ in Eq.~\eqref{SetOfMPO normalized another} under the Frobenius norm, i.e., $\|\vrho - \vrho^{(p)}\|_F\leq \epsilon$ for $\vrho^{(p)}\in \widetilde\setX_{D}$, obeying
\begin{eqnarray}
    \label{lem:cover-number-normalized left-orthogonal MPO}
    N_{\epsilon}(\widetilde\setX_{D}) \le \bigg(\frac{4n+\epsilon}{\epsilon}\bigg)^{4nD^2},
\end{eqnarray}
where $N_{\epsilon}(\widetilde\setX_{D})$ denotes the number of elements in the set $\widetilde\setX_{D}$.
\label{lem:cover-number-left-orthogonal MPO}\end{lemma}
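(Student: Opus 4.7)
The plan is to construct the $\epsilon$-net $\widetilde\setX_D$ for $\wh\setX_D$ by taking the Cartesian product of small nets built independently for each tensor core. Specifically, for each $\ell \in [n-1]$, the left-unfolding $L(\mX_\ell)$ lives in a bounded region of a complex matrix space via the operator-norm constraint $\|L(\mX_\ell)\| \le 1$, and for $\ell = n$ the core $L(\mX_n)$ lies in a unit Frobenius-norm ball. I would construct an $\eta$-net $\calN_\ell$ for each core (under the operator norm for $\ell < n$ and the Frobenius norm for $\ell = n$) with $\eta = \epsilon/(2n)$. A standard volumetric covering argument bounds $|\calN_\ell| \le (1 + 2/\eta)^{4D^2}$ per core, where $4D^2$ reflects the effective per-core degrees of freedom in the MPO factorization (edge cores carry fewer parameters, so the same upper bound applies).

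Given any target $\vrho \in \wh\setX_D$ with cores $\{\mX_\ell\}$, I would pick nearest elements $\{\wh\mX_\ell\}$ from $\calN_1 \times \cdots \times \calN_n$ satisfying $\|L(\mX_\ell - \wh\mX_\ell)\| \le \eta$ for $\ell < n$ and $\|L(\mX_n - \wh\mX_n)\|_F \le \eta$. Let $\wh\vrho \in \widetilde\setX_D$ denote the MPO assembled from $\{\wh\mX_\ell\}$. Applying Lemma~\ref{EXPANSION_A1TOAN-B1TOBN_1} entry-wise over each index string $(i_1\cdots i_n,j_1\cdots j_n)$ yields the telescoping identity
\begin{equation}
\vrho - \wh\vrho = \sum_{\ell=1}^n \vrho^{(\ell)},
\end{equation}
where $\vrho^{(\ell)}$ is the MPO whose $\ell$-th core is replaced by $\mX_\ell - \wh\mX_\ell$, with cores to its left drawn from $\{\wh\mX_k\}_{k<\ell}$ and to its right from $\{\mX_k\}_{k>\ell}$. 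Using the norm-preserving property of the left-orthogonal prefix $\{\wh\mX_k\}_{k<\ell}$, the operator-norm bound on the suffix $\{\mX_k\}_{k>\ell}$, and the sub-multiplicativity inequality $\|\mA\mB\|_F \le \|\mA\|\cdot\|\mB\|_F$, each term obeys $\|\vrho^{(\ell)}\|_F \le \eta$, so $\|\vrho - \wh\vrho\|_F \le n\eta \le \epsilon$.

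Multiplying the $n$ individual covering numbers then yields
\begin{equation}
N_\epsilon(\widetilde\setX_D) \le (1 + 2/\eta)^{4nD^2} = \bigg(\frac{4n+\epsilon}{\epsilon}\bigg)^{4nD^2},
\end{equation}
as claimed. The main obstacle will be the Frobenius-norm bookkeeping for each telescoping summand $\vrho^{(\ell)}$: one must reshape $\vrho^{(\ell)}$ via tensor--matrix equivalence, exploit the left-orthogonality of the prefix cores (so that contracting them against their Hermitian conjugates collapses to the identity), and then iteratively invoke sub-multiplicativity along the suffix, paralleling the norm analysis of Ref.~\cite{qin2024guaranteed}. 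Once this estimate is in place, the product covering argument is entirely routine.
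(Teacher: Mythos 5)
Your proposal follows essentially the same route as the paper's proof: independent per-core nets (operator norm on the first $n-1$ left-unfoldings, Frobenius norm on the last core), a product net, the telescoping identity of \Cref{EXPANSION_A1TOAN-B1TOBN_1}, a per-term bound of the form $\|\vrho^{(\ell)}\|_F \le \eta$ via submultiplicativity of the unfolding norms (the paper cites \cite[Eq.~(47)]{qin2024quantum} for exactly this step), and a resolution scaling like $\epsilon/n$. The only cosmetic differences are your choice $\eta=\epsilon/(2n)$ with the volumetric bound $(1+2/\eta)^{4D^2}$ versus the paper's $\xi=\epsilon/n$ with the per-core bound $\bigl(\tfrac{4+\xi}{\xi}\bigr)^{4D^2}$ cited from \cite{zhang2018tensor}; both choices produce the identical final expression $\bigl(\tfrac{4n+\epsilon}{\epsilon}\bigr)^{4nD^2}$.
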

\begin{proof}
For each set of matrices $\{L(\mX_\ell)\in\R^{4D\times D}: \|L(\mX_\ell)\|\leq 1\}$, according to Ref.~\cite{zhang2018tensor}, we can construct an $\xi$-net $\{L(\mX_\ell^{(1)}), \dots, L(\mX_\ell^{(N_\ell)})  \}$ with the covering number $N_\ell\leq (\frac{4+\xi}{\xi})^{4D^2}$ such that
\begin{eqnarray}
    \label{ProofOf<H,X>forSubgaussian_proof1}
    \sup_{L(\mX_\ell): \|L(\mX_\ell)\|\leq 1}~\min_{p_\ell\leq N_\ell} \|L(\mX_\ell)-L(\mX_\ell^{(p_\ell)})\|\leq \xi,
\end{eqnarray}
for all $\ell\in\{1,\dots, n-1\}$.
Also, we can construct an $\xi$-net $\{ L(\mX_n^{(1)}), \dots, L(\mX_n^{(N_n)}) \}$ for $\{L(\mX_n)\in\R^{4D\times 1}: \|L(\mX_n)\|_F\leq 1  \}$ such that
\begin{eqnarray}
    \label{ProofOf<H,X>forSubgaussian_proof2}
    \hspace{-0.5cm}\sup_{L(\mX_n): \|L(\mX_n)\|_F\leq 1}\min_{p_n\leq N_n} \|L(\mX_n)-L(\mX_n^{(p_n)})\|_F\leq \xi,
\end{eqnarray}
with the covering number $N_n\leq (\frac{2+\xi}{\xi})^{4D}$.

Therefore, we can construct a $\xi$-net $\{[\mX_1^{(1)}, \dots,  \mX_n^{(1)} ], \ldots, [\mX_1^{(N_1)}, \dots, \mX_n^{(N_n)} ]\}$ with covering number
\begin{eqnarray}
    \label{total covering number}
    \Pi_{\ell=1}^n N_\ell \leq \bigg(\frac{4+\xi}{\xi}\bigg)^{4nD^2}
\end{eqnarray}
for any MPO $\vrho = [\mX_1,\dots, \mX_n]$ with bond dimension $D$.
Then we expand $\|\vrho - \vrho^{(p)} \|_F$ as follows:
{\small \begin{eqnarray*}
    \label{expansion of difference in the covering number}
    \|\vrho - \vrho^{(p)} \|_F    &=& \|  [\mX_1,\dots, \mX_n] - [\mX_1^{(p_1)},\dots, \mX_n^{(p_n)}] \|_F\nonumber\\
    &=& \| \sum_{a_l=1}^n[\mX_1^{(p_1)},\dots, \mX_{a_l-1}^{(p_l)}, \mX_{a_l}^{(p_{a_l})}\!\!-\!\!\mX_{a_l}, \mX_{a_l+1},  \dots, \mX_n]  \|_F\nonumber\\
    &\leq &  \sum_{a_l=1}^n \|[\mX_1^{(p_1)},\dots, \mX_{a_l-1}^{(p_l)}, \mX_{a_l}^{(p_{a_l})}\!\!-\!\!\mX_{a_l}, \mX_{a_l+1},  \dots, \mX_n]  \|_F\nonumber\\
    &\leq & \sum_{a_l=1}^{n-1} \|L(\mX_{a_l}^{(p_{a_l})})\!-\! L(\mX_{a_l}) \| + \| L(\mX_{n}^{(p_{n})}) \!-\! L(\mX_{n}) \|_F  \nonumber\\
    &\leq & n \xi = \epsilon,
\end{eqnarray*}}
where the second line and the second inequality respectively follow \Cref{EXPANSION_A1TOAN-B1TOBN_1} and \cite[Eq.(47)]{qin2024quantum}. In addition, we choose $\xi = \frac{\epsilon}{n}$ in the last line. Ultimately, we can construct an $\epsilon$-net $\{\vrho^{(1)},\ldots, \vrho^{N_1\cdots N_n}\}$ with covering number
\begin{eqnarray}
    \label{total covering number epsilon conclusion}
    N_{\epsilon}(\widetilde\setX_{D}) \le \bigg(\frac{4n+\epsilon}{\epsilon}\bigg)^{4nD^2}
\end{eqnarray}
for any MPO $\vrho\in \wh\setX_{D}$.

\end{proof}

\end{document}